%
%
%
%

\documentclass[12pt, draftclsnofoot, onecolumn]{IEEEtran}

\usepackage{cite}
\usepackage{comment}
\usepackage{amsmath,amssymb,amsfonts}
\usepackage{mathtools}
\usepackage{bigints}
\usepackage{url}
\usepackage[printonlyused]{acronym}
\usepackage{nameref}
\usepackage{prettyref}
\usepackage{nicefrac}
\usepackage{afterpage}
\usepackage{algorithm}
\usepackage{algcompatible}
\usepackage{algpseudocode}
\usepackage[]{graphicx}
\graphicspath{{./figures/}}
\usepackage{eqparbox}
\usepackage[usenames,dvipsnames]{xcolor}
\usepackage{cite}
\usepackage[normalem]{ulem}
\usepackage{soul}
\usepackage[draft,nomargin,marginclue,footnote,silent]{fixme}
\usepackage[font=small]{caption}
\usepackage{placeins}
\usepackage{amsthm}
\usepackage{array}

\usepackage[caption=false]{subfig}


\newcommand{\Ptx}{p}
\newcommand{\vPtx}{\vec{p}}
\newcommand{\vB}{\vec{B}}
\newcommand{\wepsilon}{w^{\varepsilon}}
\newcommand{\vPtxprim}{\vec{p'}}
\newcommand{\vgammabar}{\vec{\gammabar}}

\newcommand{\gammabar}{\bar{\gamma}}
\newcommand{\kts}{k_{\mathrm{ts}}}
\newcommand{\M}{\mathcal{M}}

\newcommand{\Kd}{\mathcal{K}}

\newcommand{\Deltamin}{\Delta_{\mathrm{min}}}
\newcommand{\deltaEpsilon}{\Delta_{\varepsilon}}

\newcommand{\deltaPtx}{\Delta p}
\newcommand{\ra}{r_{\mathrm{a}}}
\newcommand{\deltaPtxmin}{\Delta p_{\mathrm{min}}}
\newcommand{\Ptxmax}{p_{\mathrm{max}}}
\newcommand{\Ptxmin}{p_{\mathrm{min}}}

\newcommand{\Ptxn}{{p}_n}

\newcommand{\EE}{\mathbb{E}}
\newtheorem{theorem}{Theorem}


\def\minplus{{$(\min, +)\,$}} 
\def\S{{\cal S}}
\def\A{{\cal A}}
\def\D{{\cal D}}

\def\K{{\cal K}}

\def\M{{\mathcal M}}

\def\mx{{$(\min, \times)$}}

\def\calS{\cal{S}}

\def\calSc{\mathcal{S}_c}
\def\L{\mathbb{L}}

\allowdisplaybreaks

\begin{document}

\title{  Bound-Based Power Optimization for Multi-Hop Heterogeneous Wireless Industrial Networks Under Statistical Delay Constraints} 
\author{
    \IEEEauthorblockN{Neda Petreska\IEEEauthorrefmark{1}, Hussein Al-Zubaidy\IEEEauthorrefmark{4}, Rudi Knorr\IEEEauthorrefmark{1}, James Gross\IEEEauthorrefmark{4}} \\
    \IEEEauthorblockA{\IEEEauthorrefmark{1}Fraunhofer Institute for Embedded Systems and Communication Technologies ESK
    \\\{neda.petreska, rudi.knorr\}@esk.fraunhofer.de} \\
    \IEEEauthorblockA{\IEEEauthorrefmark{4}School of Electrical Engineering, KTH Royal Institute of Technology
    \\{hzubaidy@kth.se, james.gross@ee.kth.se}
}
}
\IEEEpeerreviewmaketitle
\maketitle
\begin{abstract}
The noticeably increased deployment of wireless networks
for battery-limited industrial applications in recent years highlights the
need for tractable performance analysis methodologies as well as efficient 
QoS-aware transmit power management schemes. 
In this work, we seek to combine several important
aspects of such networks, i.e., multi-hop  connectivity, channel heterogeneity and the queuing effect, in order to address these needs.
We design delay-bound-based algorithms for transmit power minimization and network lifetime maximization of multi-hop heterogeneous wireless networks using our previously developed stochastic network calculus approach  for performance analysis of a cascade of  buffered wireless  fading channels. 
Our analysis shows an overall transmit power saving of up to 95\%  compared to a fixed power allocation scheme when using a service model in terms of the Shannon capacity limit. 
For a more realistic set-up, we  evaluate the performance of the suggested algorithm in a  WirelessHART network, which is a widely  used communication standard for process automation and other industrial applications.
We find that  link heterogeneity can significantly reduce  network
lifetime  when no efficient power management is applied. 
Moreover, we show, using extensive simulation study, that the proposed bound-based power allocation performs reasonably well compared to the real optimum, especially in the case of WirelessHART networks.
\end{abstract}
\section{Introduction}
\label{sec:intro}
In recent years, wireless networking solutions are increasingly being deployed to many new domains such as vehicular networks, machine-to-machine (M2M) communication, home automation, industrial settings and the smart grid. 
Applications in these areas often require novel combinations of delay and reliability constraints, while on the other hand relying on battery-driven wireless systems.
One area where these aspects are especially important is the area of wireless industrial networks, in particular, process automation. 
Process automation comprises the area of process sensing, control and diagnostics. 
Target application areas can be found, for example, in refineries, food and chemical industries. 
Typical process automation applications have Quality-of-Service (QoS) demands with deadlines in the order of hundreds of milliseconds and maximum outage probabilities (with respect to the deadlines) in the order of $10^{-3}$ to $10^{-4}$ \cite{zvei}.
At the same time, these applications correspond to factory sites which can span over quite a wide range of distances, varying from few meters up to few kilometers. 
Battery-driven wireless sensors and actuators are mainly applied in these scenarios due to their flexible placement possibilities. 
Of particular relevance to this work is the tendency of modern industrial solutions to deploy \emph{multi-hop topologies} in order to bridge larger
distances without necessarily shortening the lifetime of battery-powered
nodes. This poses a significant challenge to existing research efforts to determine the optimal transmit power allocation in such networks. 
Due to the fact that transmit power is one of the main energy consumers in a wireless device~\cite{berry}, an adaptive transmit power management has the potential to improve battery lifetime in addition to other techniques, such as hardware design, load balancing and transceiver state management~\cite{zheng_ondemand}. 
 
Nevertheless, when it comes to dependable industrial and machine-to-machine applications, transmit power management is challenging as it has to factor in not only the time-varying transmission rate of the wireless channel due to  fading, but also latency and reliability constraints. 
This necessitates a trade-off between power consumption at each node and the physical channel transmit rate, which potentially (due to time variability) leads to a queue build-up at that node. 
Although transmit power management under QoS requirements has been addressed for single-hop communications~\cite{berry,zafer_modiano_calculus,tang_zhang_singlehop,kandukuri_boyd}, power management under  stochastic queuing constraints for heterogeneous multi-hop wireless networks remains, to date, an open problem. Solving this problem is the focus of this work. In order to do so, we first need to express, mathematically, the queuing performance of the wireless network in terms of the transmit power, the fading environment and the used transmission technology/protocols, then solve this expression for the optimal power allocation under the given latency and reliability constraints.  
Unfortunately, an \textit{exact} expression for the performance of heterogeneous multi-hop wireless networks is not tractable so far. 
Nevertheless, recent advancements in stochastic network calculus~\cite{alzubaidy_ton,icc15} enable us to derive probabilistic delay bounds for heterogeneous networks.

Based on this insight, in this work we develop an analytical model for the performance of heterogeneous multi-hop wireless networks. Using an approach that we developed previously~\cite{icc15},  we provide a tractable expression for the end-to-end probabilistic delay bound. 
We then propose a bound-based optimal power allocation algorithm that minimizes transmit power while maintaining a bound on the delay performance.   
It is worth noting that optimizing the bound may result in a different operating point (in terms of power allocation per link, for instance) than the true system optimum (for which no tractable analytical model exists).
Taking this into account, we intend to answer the following  essential questions: 
\begin{enumerate}
\item Is it feasible to provide a bound-based optimal power allocation along a multi-hop path  of heterogeneous wireless fading channels under statistical delay constraints? If so, under what conditions?
\item How would such power allocation scheme look like? 
\item How good is such bound-based optimal power allocation scheme and how well does it perform compared to the real optimum?
\end{enumerate} 

In the following, we study the structure of the derived end-to-end delay bound and prove some of its important properties for optimization such as convexity and monotonicity of the `delay kernel,' which we define in Section \ref{sec:recursive}. This addresses the first  question above and  enables us to develop  a power minimization algorithm for optimal power allocation in a wireless multi-hop path under statistical end-to-end delay constraints. 

To address the second question, we use the derived closed-form solution from \cite{icc15} for the end-to-end delay bound for heterogeneous multi-hop wireless networks to develop two optimization algorithms: (i) a \emph{bound-based power-minimization algorithm} and (ii) a \emph{bound-based network lifetime maximization algorithm}. The first algorithm determines the minimum transmit power required, at each hop, such that a given statistical delay  bound (which encompasses both latency and reliability) is not violated. The second algorithm determines the allocation of transmit power among all transmitters along the respective path such that for given initial energy levels (e.g., battery charge per intermediate transmitter), the network lifetime is maximized given that the designated statistical delay constraint is not violated. 
While the first algorithm is suitable for energy efficient networks where energy levels can be replenished (i.e., recharging their batteries is possible), the second is more useful for wireless sensor network applications, especially, for remote area deployment where battery recharging is not possible. This motivated the development of the two separate algorithms.
Although the two algorithms may result in the same power allocation scheme for the homogeneous case (assuming identical fading distribution and same initial energy levels for all hops), this is obviously not the case for heterogeneous networks that we are interested in. 


To address the third question, we first recognize that optimizing a statistical bound on the delay may differ from optimizing the exact expression for the delay violation probability. Nevertheless, due to the lack of such an exact expression for now (and for the foreseeable future, due  to the intractability of the analysis in this case), we opt to use the (more tractable) statistical delay bound for our analysis and optimization instead. In this case, it is important to quantify the optimality gap in energy efficiency of the bound-based algorithms compared to the real optimum, which we obtain by simulations. In Section \ref{sec:num_res} we provide an extensive numerical study to address this point.
%
We show that in many cases of link heterogeneity our algorithm provides a sufficient estimate on the optimal transmit power per node, at the same time avoiding the need for extensive and time-consuming system optimization using simulations. 

In the remainder of this section, we present a literature survey of related work, and then list the main contributions of this work. We first discuss related work with respect to general end-to-end transmit power management schemes, then we discuss related work that deals with end-to-end queuing performance.   


\subsection{Transmit Power Management}
Power management under \textit{simplified} end-to-end throughput constraints has been often addressed before~\cite{kozat,cruz_power_multi,katsenou,banerjee,julian_boyd}. 
However, none of these works considers queuing effects.
For instance, \cite{kozat} describes a cross-layer design framework, minimizing the total transmit power subject to a minimal end-to-end payload rate valid for all links and maximal bit error rate (BER) requirements per session. 
The authors use heuristics to determine the transmit power per node.
\cite{cruz_power_multi} minimize the total average transmit power under the constraint of providing a minimum average data rate per link. 
The authors propose an algorithm for optimal link scheduling and power control policy. 
They also extend this to a routing algorithm, which uses the algorithms output as routing metric. 
They show that the optimal power policy chooses one of two actions, transmitting at peak power or not transmitting at all. 
Transmit power control in multi-hop networks with respect to the best possible video quality at the receiver is presented in~\cite{katsenou}. 
For this purpose, the authors maximize the peak signal-to-noise ratio and minimize the end-to-end video distortion, which is a function of the end-to-end bit error probability. 
Results show that power control does not degrade video quality significantly. 

With respect to power management and end-to-end multi-hop performance, two works are perhaps closest to our contribution.
First, \cite{tang_zhang} presents a tradeoff between the average transmit power and a corresponding queuing-delay bound for a multiuser cellular network, multi-hop and point-to-point communication. 
The authors propose a resource allocation scheme to minimize power consumption subject to statistical delay QoS, given as a queue-length decay rate, jointly determined from the effective bandwidth of the arrival traffic and the effective capacity of the wireless channel. 
The numerical analysis shows that it is possible to achieve stringent QoS guarantee with little power increase compared to the power needed for loose delay constraints. 
However, the discussed multi-hop scenario assumes an amplify-and-forward scheme, and therefore does not consider queuing at the intermediate nodes. 
Second, in~\cite{neely_modiano} the authors present a joint routing and power allocation policy for a wireless multi-hop network with time-varying channels that stabilizes the system and provides bounded average delay guarantees. 
The optimal power allocation in both proposed schemes, defined as distributed and centralized control algorithm, is determined under pre-defined stability condition, i.e., for input rates which are strictly inside the network capacity region. 
It is shown that the derived delay bounds grow asymptotically in the size of the network and a parameter that describes the distance between the arrival rates and the capacity region boundary. 
The numerical results illustrate the advantage of exploiting channel state and queue backlog. 
The work, however, covers queuing by focusing mainly on a stability condition and does not consider quantiles on the end-to-end delay. 

\subsection{Queuing Analysis of Wireless Networks}
From a queuing-theoretic perspective, significant problems arise when trying to characterize such quantiles on the end-to-end delay performance of a wireless multi-hop network.
Classical models for queuing networks typically only allow the analysis of the average delay. 
In contrast, the theory of network calculus enables an analysis of delay quantiles via bounds on the arrival and service rather than focusing on the average behaviour. 
In particular, stochastic network calculus~\cite{snc_book} has shown to be especially useful for characterizing traffic arrivals and network service of wireless multi-hop networks.
While there are many works addressing performance guarantees over wireless fading channels using this theoretical framework~\cite{jiang_emstad,fidler_mgf,fidler_wowmom,fidler_globecom,florin_throughput,florin_capacity,florin_infocom,ciucu_sigmetrics}, none of them resolve the question on end-to-end delay bounds over heterogeneous fading channels. 
Attempts in that direction are presented in~\cite{florin_throughput} and \cite{fidler_mgf}. 
However, while the first one does not provide a closed-form expression for the end-to-end service curve for wireless fading channels, the complexity of the MGF-based framework presented in~\cite{fidler_mgf} grows very fast when considering a heterogeneous multi-hop path and results in mathematically intractable expressions. 
Furthermore, the usage of the Gilbert-Elliott two-state channel model in~\cite{fidler_globecom} limits the accuracy of the fading model description.

However, a recently developed theoretical framework has enabled a new analytical toolset for performance analysis of wireless multi-hop fading channels \cite{alzubaidy_ton}. 
By means of $(\mathrm{min},\times)$-calculus, bounds on the delay and the backlog are expressed in terms of fading channel gain distribution, working directly in the so called SNR domain. 
In this domain, multi-hop descriptions of fading channels become mathematically tractable. 
Based on this work, we have made first attempts to determine the minimal required SNR on a single link in order to meet pre-defined statistical delay requirements~\cite{itc14}.
However,~\cite{alzubaidy_ton} addresses only independent and identically distributed (i.i.d.) wireless channel gains, which limits the applicability of the results to general scenarios. 

\subsection{Contributions and Paper Organization}
This paper builds on the  framework that we proposed in~\cite{itc14} and extends it to multi-hop buffered wireless links with heterogeneously distributed channel gains. Motivated by the discussion above, and the need for energy-efficient heterogeneous multi-hop wireless networks for future industrial applications, we present in this paper  the following main contributions: 
\begin{itemize}
	\item Based on the previously derived closed-form expression for the end-to-end statistical delay bound for a multi-hop path consisting of independent, but heterogeneously distributed channel gains, presented in \cite{icc15}, an iterative bound-based power-minimization algorithm for wireless multi-hop heterogeneous networks is developed. We present two variations of the algorithm: (i) minimizing the total transmit power along the path, and (ii) maximizing the network lifetime. A numerical evaluation of the performance of the two variants is performed.
	\item A proof for the convexity of the delay bound is provided and an evaluation of the  bound-based power minimization compared to simulation-based power optimization, i.e., using the exact delay process instead of a delay bound, is presented.
\end{itemize}
%

We conduct  our analysis and evaluation using two different channel capacity models, (i) an `ideal' Shannon-capacity-based model, and (ii) a more realistic WirelessHART (IEEE 802.15.4)-based link model.  
The  results that we obtained for power gain (using Shannon-based capacity) and network lifetime extension (assuming an IEEE 802.15.4-based link capacity) offer significant insights into multi-hop network design, considering heterogeneity of both channel gain as well as battery charge-state (i.e., initial energy level). 

The remaining paper is organized as follows:
Section~\ref{sec:preliminaries} presents the system model and the problem statement. 
In Section~\ref{sec:recursive} we present the closed-form of the end-to-end delay bound for heterogeneous wireless networks.
These results are the basis for the power minimization algorithms presented in Section~\ref{sec:algorithm}. 
We discuss the numerical evaluation of the algorithms in Section~\ref{sec:num_res}. 
Finally, Section~\ref{sec:conclusion} concludes the paper.
\section{System Model and Problem Statement}
\label{sec:preliminaries}

We consider the communication between a source node $r$ and a destination node $d$ within a multi-hop wireless network (see Fig.~\ref{fig:multihop_path}). 
Let the multi-hop path in question, illustrated with the solid lines in Fig.~\ref{fig:multihop_path}, be given with an ordered set of buffered links, i.e., \mbox{$\mathbb{L} = \{1,..,N\}$}, where $N$ is the number of links constituting path $\mathbb{L}$.
\begin{figure}
\centering
\captionsetup{justification=centering}
  \includegraphics[scale=0.4]{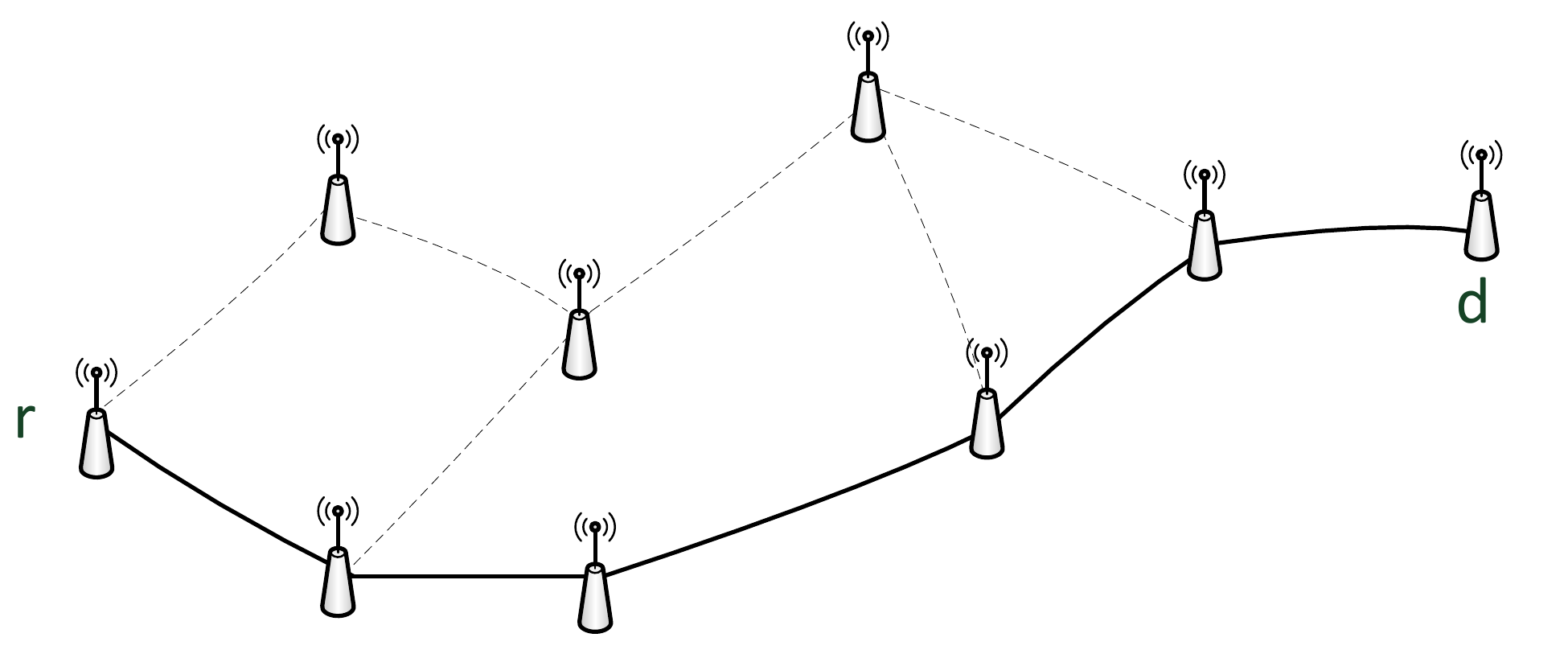}
\caption{Illustration of the system model}
\label{fig:multihop_path}
\end{figure} 
We assume a time-slotted system where every link is assigned a time slot of fixed-length $T$. 
The nodes propagate the packets along the path every time they are allocated a transmission time slot, according to a given scheduling algorithm.
As soon as a packet reaches the destination node, it is passed to its application layer without any additional delay. 

Each wireless link $n\in\mathbb{L}$ is assumed to be a block-fading channel with an average channel gain $|{\bar{h}}_n^2|$. 
This means, the random instantaneous channel gain $h_{i,n}^2$ of link $n$ at time slot~$i$ remains constant within the time interval $T$, but varies independently from slot to slot\footnote{This assumption is valid for a low mobility  environment where the channel gain remains constant within one transmission interval. Furthermore, frequency-hopping channel is one example where the channel gain varies independently from one transmission interval to the other \cite{slow_fading}.}.
Different links are assumed to have statistically independent channel gains while in general we consider heterogeneously distributed channel gains. Moreover, the instantaneous channel gain consists of two components: the instantaneous fading component $h_f^2$ and the constant path-loss $h_p^2$, the latter depending on the distance between the nodes and the path-loss exponent, i.e., $h_{i,n}^2={h_{f}^2}_{i,n}\cdot {h_{p}^2}_{n}$.
Although interference is not considered in this paper explicitly, we provide an approach, using the concept of `leftover service curve', to model interference from cross-flows in the network.
Together with the transmit power setting $p_{i,n}$ and the noise power $\sigma^2$, this yields the instantaneous SNR of link $n$ in time slot $i$ as:
\begin{equation}
\gamma_{i,n} = \frac{p_{i,n} \cdot h_{i,n}^2}{\sigma^2}
\label{eq:SNR_definition}
\end{equation} 
Given the instantaneous SNR $\gamma_{i,n}$, the resulting service per link $n$ is given by the link's capacity at time slot $i$.
We denote the capacity of link $n$ during time slot $i$ by the function $C \cdot \log_2 g(\gamma_{i,n})$, where $C$ is the number of symbols per time slot.
 We assume that transmitter $n$ knows only the channel state information  for the $n^{\rm th}$ channel  (corresponding to  one of the (up to) $N$ transmission slots within a transmission frame), but not for the other $N-1$ channels.
Furthermore, we define $\vgammabar$ and $\vPtx$ as the vectors of average SNRs and transmit power of the links, i.e., nodes along the path, respectively.

At the application layer, we consider a monitoring process generating a measurement value at a regular interval. 
Therefore, as a model for the arrival flow we consider packets of size $\ra$ bits arriving per interval $R_a \cdot T$ with $R_a \in \mathbb{N}$.
However, the application has strict latency and reliability constraints. 
These are modeled by the QoS pair $\left\{\omega^{\varepsilon}, \varepsilon\right\}$ where $\omega^{\varepsilon}$ represents a maximum tolerable delay that can be violated at most with probability $\varepsilon$.
This delay target includes all processing steps below the application layer, therefore including also any queuing delay along the multi-hop path.

In this paper, we are interested in the trade-off between the energy consumption of the network - mainly driven by the transmit power per node $p_{i,n}$ - and the resulting delay and delay violation probability.
In this context, we are  particularly interested in the minimization of the sum of the transmit power along the path (from now onward referred to as \emph{total transmit power}) under a given QoS pair $\left\{\omega^{\varepsilon}, \varepsilon\right\}$, and  in the maximization of the network lifetime given a battery state vector $\vB$ consisting of battery states per node\footnote{In this model, node $n$ refers to the transmitter node preceding link $n$.}, $B_{n}$, as well as an energy consumption model, under a given QoS pair $\left\{\omega^{\varepsilon}, \varepsilon\right\}$. 

In the following, we will develop algorithms that determine the corresponding system configurations (i.e., transmit power settings per node) under the desired QoS constraints. 
These algorithms are based on an analytical expression of the network performance (i.e., latency and reliability) in terms of the underlying physical channel properties, which includes the power allocation scheme. Hence, the first step in our efforts is to develop an analytical model of the end-to-end delay performance in a wireless multi-hop network in terms of transmit power allocation, which we address  in the next section. Then, we develop two  algorithms for total power minimization and for network lifetime maximization in Section~\ref{sec:algorithm}.

\section{End-to-End Delay Bound over Heterogeneous Links} 
\label{sec:recursive}

In this section, we develop end-to-end performance bounds based on stochastic network calculus for heterogeneous, multi-hop communication paths.
For wireless fading channels, end-to-end probabilistic bounds have only been obtained for concatenated i.i.d. service processes (i.e., for multi-hop wireless links all having independent and identically distributed fading processes) \cite{alzubaidy_ton}.
In order to apply these results to heterogeneous networks, we generalize the available results to arbitrarily distributed random service processes. 
For reader's benefit, we first recap some network calculus basics before presenting our theoretical results.

\subsection{Stochastic Network Calculus}
\label{sec:snc_basics}

Stochastic network calculus considers queuing systems and networks of systems with stochastic arrival and departure processes, where the bivariate functions $A(\tau,t)$, $D(\tau,t)$ and $S(\tau,t)$, for any $0 \le \tau \le t$, denote the \textit{cumulative} arrivals to the system, departures from the system, and service offered by the system, respectively, in the interval $[\tau,t)$.  
Recall that we consider a discrete time model, where time slots have a duration $T$ and $i \geq 0$ denotes the index of the respective time-slot. Hence, $t,\tau,i \in \mathbb{Z}$.

A lossless system with service process $S(\tau,t)$ satisfies the input/output relationship $D(0,t) \geq A \otimes S \left(0,t\right)$, where $\otimes$ is the \minplus~convolution operator defined as
\begin{equation}
 x \otimes y \left(\tau,t\right) = \inf_{\tau \leq u \leq t} \left\{ x(\tau,u) + y (u,t) \right\} \; .
 \label{eq:convolution}
 \end{equation}
 
In this approach, we are generally interested in probabilistic  bounds of the form $\mathrm{Pr}\left[ W(t) > w^{\varepsilon} \right] \leq \varepsilon$, which is also known as the \textit{violation probability} for a target delay $w^{\varepsilon}$, under the following system stability condition:
\begin{equation}
\label{eq:stability}
\lim_{t\to \infty} {\frac{A(0,t)}{t}} < \lim_{t\to \infty} {\frac{S(0,t)}{t}}.
\end{equation}

Modeling wireless links in the context of network calculus however is  not  a trivial task. 
A particular difficulty arises when we seek to obtain a stochastic characterization of the cumulative service process of a wireless fading channel, as also witnessed in the context of the effective service capacity of wireless systems \cite{wu}. 
A promising, recent approach for wireless networks has been proposed in \cite{alzubaidy_ton} where the queuing behavior is analyzed directly in the ``domain'' of channel variations instead of the bit domain \cite{jiang:servermodel,fidler_globecom,lee_jindal,mahmood:mimo,fidler_mgf,wu}.
This can be interpreted as the \textit{SNR domain} (thinking of bits as ``SNR demands'' that reside in the system until these demands can be met by the channel).

To start with, the cumulative arrival, service, and departure processes in the bit domain, i.e., $A$, $D$, and $S$, are related to their SNR domain counterparts (represented in the following by calligraphic capital letters $\A$, $\D$, and $\S$) respectively, through the exponential function.
Thus, we have \mbox{$\mathcal{A}(\tau,t) \triangleq e^{A(\tau,t)}$}, $\mathcal{D}(\tau,t) \triangleq e^{D(\tau,t)}$, and $\mathcal{S}(\tau,t) \triangleq e^{S(\tau,t)}$.
Due to the exponential function, these cumulative processes become products of the increments in the bit domain. In the following, we will assume $\mathcal{A}\left(\tau,t\right)$ and $\mathcal{S}\left(\tau,t\right)$ to have stationary and independent increments.
We denote them by $\alpha$ for the arrivals (in SNR domain) and $g\left(\gamma\right)$ for the service.
For instance, assuming a single-hop wireless system with a point-to-point channel, where the SNR domain is related to the bit domain through the well-known Shannon capacity expression, as follows 
\begin{equation}
s_{i} =  \log (g\left(\gamma_i\right)) = C \log_2\left( 1+ \gamma_i \right),
\label{eq:Shannon}
\end{equation}
where $s_i$ is the random service in bits offered by the system in time slot $i$, $C$ is the number of transmitted symbols per time slot, and $\gamma_i$ is the instantaneous SNR.
Then, we can obtain the cumulative service process in the SNR domain as
\begin{equation}
\label{eq:service_process_snr}
  \mathcal{S}(\tau,t)  = \prod_{i=\tau}^{t-1}  e^{s_i} = \prod_{i=\tau}^{t-1} g\left(\gamma_i\right) = \prod_{i=\tau}^{t-1} \left(1 + \gamma_i\right)^{\mathcal{C}},
\end{equation}
where $\mathcal{C} = C / \log 2$. 
Furthermore, in case of first-come first-served order, the delay at time $t$ is obtained
as follows
\begin{equation}
\label{eq:delay_snr}
  W(t)=\mathcal{W}(t)=\inf \{i \geq 0 : \A(0,t) / \D(0,t+i) \leq 1 \}.
\end{equation}
An upper bound $\varepsilon$ for the delay violation probability $\mathrm{Pr}\left[W(t)>w^\varepsilon\right]$ can be derived based on a transform of the cumulative arrival and service processes in the SNR domain using the moment bound.
In~\cite{alzubaidy_ton}, it was shown that such a violation probability bound for a given $\wepsilon$ can be obtained as
$\inf\limits_{s>0}\left\{\Kd(s, t+\wepsilon,t)\right\}$.

\noindent We refer to the function $\Kd\left(s,\tau,t\right)$ as the \textit{kernel} defined as
 \begin{equation}
\label{eq:function_M_Hussein}
  \Kd(s, \tau, t) = \sum_{i=0}^{\mathrm{min}(\tau,t)} \mathcal{M}_{\mathcal{A}}(1+s,i,t) \mathcal{M}_{\mathcal{S}}(1-s,i,\tau),
\end{equation}
where the function $\mathcal{M}_{\mathcal{X}}\left(s\right)$ is the Mellin transform \cite{Book:mellin} of a random process, defined as
\begin{equation}
\mathcal{M}_{\mathcal{X}}\left(s,\tau,t\right) = \mathcal{M}_{\mathcal{X}\left(\tau,t\right)} \left(s\right) = \EE\left[\mathcal{X}^{s-1}\left(\tau,t\right)\right],
\label{eq:Mellin_Definition}
\end{equation}
for any $s \in \mathbb{C}$, whenever the expectation exists. We restrict our derivations in this work to real valued\footnote{We note that by definition of $\mathcal{X}(\tau,t) = e^{X(\tau,t)}$, the Mellin transform $\mathcal{M}_{\mathcal{X}}\left(s,\tau,t\right) = \EE\left[e^{(s-1)X(\tau,t)}\right]$ after substitution of parameter $s = \theta+1$ implies also a solution for the moment-generating function (MGF), that is the basis of the effective capacity model \cite{wu} and of an MGF network calculus \cite{fidler_mgf}.} $s \in \mathbb{R}$. Introducing the Mellin transform in the performance analysis of wireless fading channels results into tractable mathematical expressions when computing network calculus bounds, which in turn results in scalable closed-form solutions. Using the assumption of stationary increments of the arrival and service processes, their Mellin transforms become independent of the time instance, and hence we write $\mathcal{M}_{\mathcal{X}}\left(s, t - \tau\right)$.
%
In addition, as we only consider stable queuing systems in steady-state, the kernel becomes independent of the time instance $t$ and we denote $\Kd\left(s,t+\wepsilon,t \right) \overset{t \to \infty}{=} \Kd\left(s,-\wepsilon\right)$.

The strength of the Mellin-transform-based approach becomes apparent when considering block-fading channels.
The Mellin transform for the cumulative service process in SNR domain is given by
\begin{equation}
\label{eq:mellin_transform_service_basic}
 \mathcal{M}_{\mathcal{S}}\left(s,\tau,t\right)=\prod_{i=\tau}^{t-1} \mathcal{M}_{g(\gamma)}\left(s\right)=\mathcal{M}_{g(\gamma)}^{t-\tau}\left(s\right)= \mathcal{M}_{\S}\left(s,t-\tau\right)\, , \nonumber
\end{equation}
where $\mathcal{M}_{g(\gamma)}\left(s\right)$ is the Mellin transform of the stationary and independent service increment $g\left(\gamma\right)$ in the SNR domain.
The function $g\left(\cdot\right)$ is associated with the channel capacity of a point-to-point fading channel as defined by Eq.~\eqref{eq:Shannon}. However, it can also model more complex system characteristics, most importantly scheduling effects. It is important here to note that, $g\left(\cdot\right)$ can represent \emph{any} service capacity, as long as it is being represented in the SNR domain. In this work we focus, however, on wireless fading channels. 

Assuming the cumulative arrival process in SNR domain to have stationary and independent increments we denote the corresponding Mellin transform by \mbox{$\M_{\A}\left(s,t - \tau\right) = \prod_{i=\tau}^{t-1}\M_{\alpha}(s) = {\M_{\alpha}}^{t-\tau}(s)$}. Substituting these two cumulative processes in Eq.~\eqref{eq:function_M_Hussein}, for the general form of the steady-state kernel for a communication channel we get 
\begin{equation}
  \Kd\left(s,-w\right) = \frac{\M_{g\left(\gamma\right)}^w\left(1 - s\right)}{1 - \M_{\alpha}\left(1 + s \right) \M_{g\left(\gamma\right)}\left(1 - s \right)}
  \label{eq:delay_kernel}
\end{equation}
for any $s > 0$, whenever the following stability condition holds,
\begin{equation}
\label{eq:stability_cond}
  \M_{\alpha}\left(1 + s \right) \M_{g\left(\gamma\right)}\left(1 - s \right) < 1.
\end{equation}

Assuming Rayleigh fading, i.e., an exponentially distributed SNR with average $\bar{\gamma}$ at the receiver, the Mellin transform of the service process results into \cite{alzubaidy_ton}
\begin{equation}
\label{eq:service_process_mellin}
  \mathcal{M}_{g(\gamma)}\left(s\right)=e^{\frac{1}{\bar{\gamma}}} {\bar{\gamma}}^{s-1} \Gamma\left(s,{\bar{\gamma}^{-1}}\right).
\end{equation}
\noindent where $\Gamma(x,y)=\int_y^{\infty} t^{x-1}e^{-t}dt$ is the incomplete Gamma function.
Substituting this in Eq.~\eqref{eq:delay_kernel}, the steady-state kernel for a Rayleigh-fading wireless channel is given by
\begin{equation}
  \Kd\left(s,-w\right) =
  \frac{\left(e^{\nicefrac{1}{\bar{\gamma}}} {\bar{\gamma}}^{-s } \Gamma\big(1-s,\frac{1}{\bar{\gamma}}\big)\right)^{w}}{1- \M_{\alpha}\left(1 + s\right)  e^{\nicefrac{1}{\bar{\gamma}}}  {\bar{\gamma}}^{-s } \Gamma\big(1- s ,\frac{1}{\bar{\gamma}}\big)} \, ,
  \label{eq:delay_kernel_rayleigh}
\end{equation}
for any $s > 0$ and under the stability condition in Eq.~(\ref{eq:stability_cond}). By optimizing Eq.~\eqref{eq:delay_kernel_rayleigh} over all $s>0$, i.e., $\inf_{s>0} \{\Kd(s,-w)\}$, a bound on the delay violation probability $\varepsilon$ for a given $w$ can be obtained.

An asymptotic lower bound, which coincides with the upper bound when $w \rightarrow \infty$, can be obtained using the large-deviation theory~\cite{alzubaidy_ton}.

\subsection {Recursive Formula}
\label{sec:recursion}

A further advantage of network calculus is the ability to capture a cascade of service processes into a  joint service curve using the server concatenation theory. This property is especially useful for performance analysis of multi-hop networks. Similar to the $(\min,+)$ algebra, the joint service curve is obtained through the end-to-end convolution according to the \mx~network calculus. For a path $\mathbb{L}$ it is defined as follows:
\begin{equation}
\S^{\mathbb{L}}(\tau,t)=\S_1 \otimes \S_2 \otimes ... \otimes \S_N(\tau,t),
\label{eq:S_convoluted}
\end{equation}
\noindent where $\S_1 \otimes \S_2(\tau,t)=\inf_{\tau \leq i \leq t} \{\S_1(\tau,i)\cdot \S_2(i,t)\}$.

A Mellin transform of the \mx~convolution is not available. Instead, we define a bound on the Mellin transform of the end-to-end service based on the server concatenation defined above and using the union bound, also known as Boole's inequality. 

Let $\mathcal{S}_1(\tau,t)$ and $\mathcal{S}_2(\tau,t)$ be two independent non-negative bivariate random processes representing the service processes of link 1 and 2, respectively. 
For $s<1$, the Mellin transform of the $(\min,\times)$ convolution of ${\cal{S}}_1$ and ${\cal{S}}_2$, denoted by $\mathcal{S}_1\otimes\mathcal{S}_2(\tau,t)$, is bounded by
\begin{equation}
\label{eq:convolution_mellin}
  \mathcal{M}_{\mathcal{S}_1\otimes\mathcal{S}_2}(s,\tau,t)\leq \sum_{i=\tau}^t \mathcal{M}_{\mathcal{S}_1}(s,\tau,i) \cdot \mathcal{M}_{\mathcal{S}_2}(s,i,t) \nonumber
\end{equation}

Hence, the corresponding Mellin transform of the path $\mathbb{L}$ can be bounded by~\cite{alzubaidy_ton}:
\begin{align}
\label{eq:mellin_transform_path}
\M_{\S^{\mathbb{L}}}(s,\tau,t) &\leq \sum_{i_1=i_0}^{t} \sum_{i_2=i_1}^{t}\dots \!\!\! \sum_{i_{N-1}=i_{N-2}}^{t}  \!\!\! \M_{\S_1}\left(i_1 - i_0\right) \cdot 
\M_{\S_2}\left(i_2 \! - \! i_1\right) \dots \M_{\S_N}\left(i_N \! - \! i_{N-1}\right)\!\! \! \notag \\
&  = \sum_{i_1 \dots i_{N-1}}^{t}  \prod_{j=1}^{N} \M_{{\cal{S}}_j}^{i_j - i_{j-1}}\left(s\right), 
\end{align} 
\noindent with $\tau = i_0 \leq i_1 \leq \dots \leq i_N = t$.
Notice that $\M_{{\cal{S}}_j}\left(s\right)$ denotes the Mellin transform of the (stationary) SNR service increments of link $j$.

As one may notice from Eq.~\eqref{eq:convolution_mellin}, this results in a cumbersome computation, especially for links having different channel gain distribution, since $N$ convolution processes have to be computed, each of them depending on $t$. Hence, we now present a significant simplification of Eq.~\eqref{eq:convolution_mellin} into a mathematically easier-to-grasp analytical solution, avoiding the tedious task of performing $N$ nested sums in Eq.~\eqref{eq:mellin_transform_path}. To this end, we define $\Kd^{\mathbb{L}}$ as the kernel for a path $\mathbb{L}$ containing $N$ links, similar to Eq.~\eqref{eq:function_M_Hussein}, where we replace $\mathcal{M}_{\mathcal{S}}(1-s,i,\tau)$ with $\mathcal{M}_{\mathcal{S}^{\mathbb{L}}}(1-s,i,\tau)$ for $\S^{\mathbb{L}}$ defined in Eq.~\eqref{eq:S_convoluted}.
Once that Mellin transform can be determined, a probabilistic end-to-end delay bound $w^{\mathbb{L}}(\varepsilon)$ for path $\mathbb{L}$ can be computed using Eq.~(\ref{eq:function_M_Hussein}) as the smallest $w$ that satisfies the following inequality \cite{alzubaidy_ton}

\begin{align}\label{eq:DelayBound}
\inf_{s \ge 0}\left \{  \Kd^{\mathbb{L}} (s, -w^\varepsilon)  \right \} \le \varepsilon \, . 
\end{align}

Let $m$ and $N$ refer to the $m^{\mathrm{th}}$ and $N^{\mathrm{th}}$ link of path $\mathbb{L}$, respectively.

\begin{theorem} \label{thm:1}
Given a path $\mathbb{L} \setminus \{N\}$ of links with independent and heterogeneously distributed service processes,  with kernel $\Kd^{\mathbb{L} \setminus \{N\}}$; then $\Kd^{\mathbb{L}}$ can be obtained in terms of $\Kd^{\mathbb{L} \setminus \{N\}}$  as follows
{
\begin{equation}
\begin{aligned}
 \Kd^{\mathbb{L}} \left(s,-w\right)& =  \frac{\M_{g(\gamma_N)}\left(1-s\right)}{\M_{g(\gamma_N)}\left(1-s\right) - \M_{g(\gamma_{m})}\left(1-s\right)} \Kd^{\mathbb{L} \setminus \{m\}} \left(s,-w \right) \notag \\
&+ \frac{\M_{g(\gamma_{m})}\left(1-s\right)}{\M_{g(\gamma_{m})}\left(1-s\right) - \M_{g(\gamma_{N})}\left(1-s\right)} \Kd^{\mathbb{L} \setminus \{N\}} \left(s,-w\right) 
\label{eq:M_hop_delay_kernel}
\end{aligned}
\end{equation}   }
for any $m \in \{1,2,\dots ,N-1\}$.
\end{theorem}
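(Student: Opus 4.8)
The plan is to build the recursion on the explicit \emph{residue form} of the kernel that underlies the closed-form solution of \cite{icc15}. Writing $\phi_l := \M_{g(\gamma_l)}(1-s)$ and $\psi := \M_\alpha(1+s)$, I first recall how that form arises: for block-fading links the end-to-end service Mellin transform of a sub-path $P\subseteq\mathbb{L}$ over $n=t-\tau$ slots is the ordered-composition sum
\[
\M_{\S^{P}}(1-s,n) \;=\!\!\! \sum_{\substack{k_l\ge 0,\ l\in P \\ \sum_{l} k_l = n}}\!\! \prod_{l\in P}\phi_l^{\,k_l}\, ,
\]
i.e. the complete homogeneous symmetric polynomial in $\{\phi_l\}_{l\in P}$. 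Its generating function $\prod_{l\in P}(1-\phi_l z)^{-1}$ has only simple poles, so a partial-fraction decomposition (this is exactly where heterogeneity, $\phi_l\neq\phi_k$, is used) combined with Eq.~\eqref{eq:function_M_Hussein} and summation of the geometric series in $\psi\phi_l$ (convergent under the stability condition Eq.~\eqref{eq:stability_cond}) yields, for every $P\subseteq\mathbb{L}$,
\[
\Kd^{P}(s,-w)=\sum_{l\in P} T^{P}_l,\qquad T^{P}_l:=\frac{\phi_l^{\,w+|P|-1}}{(1-\psi\phi_l)\prod_{k\in P\setminus\{l\}}(\phi_l-\phi_k)}\,.
\]
For $|P|=1$ this collapses to Eq.~\eqref{eq:delay_kernel}, a convenient sanity check.

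The single identity that drives the recursion is the \emph{link-removal relation}: for $l\in P$ and $k\in P\setminus\{l\}$, cancelling one factor $\phi_l$ against the lowered exponent and pulling $(\phi_l-\phi_k)$ out of the product gives
\[
T^{P}_l=\frac{\phi_l}{\phi_l-\phi_k}\,T^{P\setminus\{k\}}_l\,.
\]
Specialising $P=\mathbb{L}$ and $k\in\{m,N\}$ rewrites each per-link term of the full path through the corresponding term of a path missing one link.

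With this in hand the proof reduces to substituting $\Kd^{\mathbb{L}\setminus\{m\}}=\sum_{l\neq m}T^{\mathbb{L}\setminus\{m\}}_l$ and $\Kd^{\mathbb{L}\setminus\{N\}}=\sum_{l\neq N}T^{\mathbb{L}\setminus\{N\}}_l$ into the right-hand side of Eq.~\eqref{eq:M_hop_delay_kernel} and matching the coefficient of $T^{\mathbb{L}}_l$ index by index. Three cases occur. For a shared index $l\notin\{m,N\}$ both sums contribute; the link-removal relation turns the two weighted terms into $\frac{T^{\mathbb{L}}_l}{\phi_l(\phi_N-\phi_m)}\big[\phi_N(\phi_l-\phi_m)-\phi_m(\phi_l-\phi_N)\big]$, and the bracket collapses to $\phi_l(\phi_N-\phi_m)$, returning exactly $T^{\mathbb{L}}_l$ — this is the one place where the precise weights $\phi_N/(\phi_N-\phi_m)$ and $\phi_m/(\phi_m-\phi_N)$ matter. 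The index $l=N$ is carried only by $\Kd^{\mathbb{L}\setminus\{m\}}$, and $\frac{\phi_N}{\phi_N-\phi_m}T^{\mathbb{L}\setminus\{m\}}_N=T^{\mathbb{L}}_N$ is the link-removal relation with $k=m$; symmetrically $l=m$ is supplied solely by $\Kd^{\mathbb{L}\setminus\{N\}}$ via $k=N$. Summing over $l$ reassembles $\sum_l T^{\mathbb{L}}_l=\Kd^{\mathbb{L}}$, which is Eq.~\eqref{eq:M_hop_delay_kernel}.

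I expect the main obstacle to be the first step rather than the recursion itself: recognising the $(N-1)$-fold nested sum of Eq.~\eqref{eq:mellin_transform_path} as a complete homogeneous symmetric polynomial and resolving it by partial fractions into the residue form $\sum_l T^P_l$. This step tacitly needs the $\phi_l$ to be pairwise distinct (the genuinely heterogeneous regime) and needs $\psi\phi_l<1$ for each pole so the geometric sums converge. Once the residue form is available the recursion is a short finite algebraic identity, and the only remaining care is bookkeeping: tracking which index is absent from each $(N-1)$-link kernel and checking that the prefactors restore both the missing residue term and the correct $\phi_l$-degree. I would finally remark on the confluent limit $\phi_m\to\phi_N$, where the two weights diverge individually but their sum stays finite, consistently recovering the i.i.d. kernel of \cite{alzubaidy_ton}.
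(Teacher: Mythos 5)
Your proof is correct, but it takes a genuinely different route from the paper's. The paper proves the recursion directly at the level of the service process: it isolates the innermost index $i_{N-1}$ in the nested sum of Eq.~\eqref{eq:mellin_transform_path}, evaluates it as a finite geometric series in $\M_{g(\gamma_{N-1})}/\M_{g(\gamma_N)}$, and observes that the two resulting terms are exactly the service Mellin transforms of $\mathbb{L}\setminus\{N-1\}$ and $\mathbb{L}\setminus\{N\}$ with the claimed prefactors; the recursion then transfers to the kernel because the kernel is linear in $\M_{\S^{\mathbb{L}}}$. You instead first resolve the whole nested sum at once — recognising it as the complete homogeneous symmetric polynomial in the $\phi_l=\M_{g(\gamma_l)}(1-s)$, decomposing its generating function by partial fractions, and summing the geometric series in $\psi\phi_l$ under the stability condition — to obtain an explicit residue form $\Kd^{P}=\sum_{l\in P}T^{P}_l$ for every sub-path, and then verify the recursion by matching coefficients via the link-removal identity $T^{P}_l=\tfrac{\phi_l}{\phi_l-\phi_k}T^{P\setminus\{k\}}_l$. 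The paper's argument is shorter and needs no closed form; yours is heavier up front but buys more: it exhibits the end-to-end kernel as a sum of only $N$ residue terms (rather than the $2^{N-1}$ summands the paper's recursion nominally generates), makes the role of pairwise-distinct $\phi_l$ and of the per-link stability conditions explicit, and cleanly handles the confluent limit $\phi_m\to\phi_N$ recovering the i.i.d.\ case. Both arguments, like the paper's, prove the identity for the bounding expression obtained from Boole's inequality, which is how the path kernel is defined here, so there is no gap on that account.
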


\noindent The proof for  Theorem \ref{thm:1} is given in Appendix~\ref{app:theorem_proof}. Note that this result has been previously presented in~\cite{icc15}. 

A direct consequence of Theorem~\ref{thm:1} and Eq.~\eqref{eq:DelayBound} is that the delay bound for path $\mathbb{L}$ can be obtained from recursively computing the kernel according to the theorem. In this recursion, the number of summands increases with the number of hops. For an $N$-hop path there are $2^{N-1}$ summands, as each geometric sum results into two summands. Furthermore, the stability condition in Eq.~\eqref{eq:stability_cond} needs to hold for every individual link $n \in \{1,..,N\}$, i.e.,:
 \begin{equation}
   \max_{n} \left( \M_{\alpha}\left(1 + s \right) \cdot \M_{g\left(\gamma_n\right)}\left(1 - s \right)  \right) < 1 \nonumber
 \end{equation} 
 \noindent has to be fulfilled. 
Finally, note in particular that in principle Theorem~\ref{thm:1} can be generalized to any link $n$ and the path $\mathbb{L} \setminus \{n\}$. This allows an efficient recomputation of the kernel in case that any of the links of the path change their primary distribution, for instance due to a changed propagation environment. In contrast, in case of using Eq.~\eqref{eq:mellin_transform_path} if a single link changes its distribution, a complete recomputation of the joint service curve characterization has to be performed, which is significantly more complex. 


We show next that the kernel described by Theorem~\ref{thm:1} is convex in $s > 0$. The following theorem states this convexity. 
\begin{theorem} \label{thm:2}
The steady-state kernel $\K(s,-w)$ for a communication channel,
\begin{equation}
  \Kd\left(s,-w\right) = \frac{\M_{g\left(\gamma\right)}^w\left(1 - s\right)}{1 - \M_{\alpha}\left(1 + s \right) \M_{g\left(\gamma\right)}\left(1 - s \right)}, \nonumber
\end{equation}
is convex in $s, \forall s>0$ for which the stability condition $\M_{\alpha}\left(1 + s \right) \M_{g\left(\gamma\right)}\left(1 - s \right) < 1$ 
holds. Furthermore, the end-to-end kernel $\K^{\mathbb{L}}(s,-w)$ for a multi-hop path $\mathbb{L}$ is convex in $s$, for every $s$ within the stability interval.
\end{theorem}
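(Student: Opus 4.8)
The plan is to sidestep the ratio form of the kernel entirely. Working directly with the quotient in Theorem~\ref{thm:2}, or with the recursion of Theorem~\ref{thm:1}, is unpromising: the denominators there involve \emph{differences} of Mellin transforms, and such subtractions destroy any hope of an elementary convexity argument. Instead I would reduce everything to a single structural fact, namely that every Mellin transform entering the kernel is \emph{log-convex} in $s$. Indeed, writing $\M_{\X}(s) = \EE[\X^{s-1}] = \EE[e^{(s-1)\log\X}]$, the map $s \mapsto \log \M_{\X}(s)$ is, after the affine shift $s\mapsto s-1$, a cumulant generating function, hence convex by H\"older's inequality. Consequently $f(s) \triangleq \M_{g(\gamma)}(1-s)$ and $a(s)\triangleq\M_{\alpha}(1+s)$ are log-convex, being log-convex functions composed with affine maps. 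I would then invoke two elementary closure properties: (i) a log-convex function is convex, since $e^{(\cdot)}$ is convex and increasing; and (ii) a product of log-convex functions is log-convex, since $\log$ turns the product into a sum of convex functions.

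For the single-link kernel, the decisive step is the geometric-series expansion, valid exactly on the stability interval where $a(s)f(s)<1$:
\begin{equation}
\Kd(s,-w) = \frac{f(s)^{w}}{1 - a(s)f(s)} = \sum_{k=0}^{\infty} f(s)^{\,w+k}\, a(s)^{k}. \nonumber
\end{equation}
Each summand $f(s)^{w+k}a(s)^{k}$ is a product of positive powers of log-convex functions, hence itself log-convex and therefore convex. The partial sums are finite sums of convex functions, hence convex, and they increase pointwise to $\Kd(s,-w)$ on the stability interval; since a pointwise limit of convex functions on a convex domain is convex, the single-link kernel is convex.

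For the end-to-end kernel I would apply the same principle to the \emph{definition} of $\Kd^{\mathbb{L}}$, not to the recursion. Substituting the union-bound characterization of $\M_{\S^{\mathbb{L}}}$ from Eq.~\eqref{eq:mellin_transform_path} into the kernel definition Eq.~\eqref{eq:function_M_Hussein} leaves $\Kd^{\mathbb{L}}(s,-w)$ written as a sum over ordered time indices of products of the form $\M_{\alpha}(1+s)^{p}\prod_{j=1}^{N}\M_{g(\gamma_j)}(1-s)^{q_j}$, where the exponents $q_j = i_j - i_{j-1}\ge 0$ are differences of ordered indices and $p\ge 0$ counts backlogged intervals. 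Every such monomial is a product of nonnegative powers of the log-convex functions $f$ and $a$ (one per link), hence log-convex and thus convex. The sum converges on the stability interval $\max_n \M_{\alpha}(1+s)\M_{g(\gamma_n)}(1-s)<1$, and a convergent sum of convex functions is convex, giving the claim.

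I expect the genuine obstacle to be conceptual rather than computational: recognizing that one must avoid the recursion of Theorem~\ref{thm:1}, whose sign structure defeats any direct convexity argument, and instead retain the kernel in its \emph{pre-summed} monomial form where the nonnegativity of all exponents is manifest from the ordering $\tau = i_0 \le i_1 \le \dots \le i_N = t$. Once this is seen, the only items requiring care are the convergence of the infinite sum and the term-by-term validity of the expansion, both of which are exactly what the stability condition provides. The remaining ingredients---log-convexity of the Mellin transforms and the closure of convexity under products, nonnegative powers, sums, and pointwise limits---are standard and need only brief justification.
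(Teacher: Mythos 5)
Your proof is correct, but it takes a genuinely different route from the paper's. The paper attacks the ratio directly: it first shows via H\"older's inequality that $\M_{\alpha}(1+s)\M_{g(\gamma)}(1-s)$ is convex, deduces that $1-\M_{\alpha}(1+s)\M_{g(\gamma)}(1-s)$ is concave and positive so that its reciprocal is convex, and then folds in the numerator $\M_{g(\gamma)}^{w}(1-s)$ with a second application of H\"older together with the elementary bound $u^{\delta}v^{1-\delta}\le \delta u+(1-\delta)v$; for the multi-hop claim it invokes the recursion of Theorem~\ref{thm:1} and argues that the end-to-end kernel is a positive linear combination of convex single-hop kernels. Your route---log-convexity of the Mellin transforms as moment generating functions, closure of log-convexity under affine substitution, products, nonnegative powers and convergent sums, and the geometric-series/monomial expansion of the kernel on the stability interval---is shorter and in fact yields the strictly stronger conclusion that the kernel is \emph{log-convex}, not merely convex. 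It also buys something substantive for the second claim: the recursion coefficients $\M_{g(\gamma_N)}(1-s)/\bigl(\M_{g(\gamma_N)}(1-s)-\M_{g(\gamma_m)}(1-s)\bigr)$ are $s$-dependent and of opposite signs (the two denominators are negatives of one another, so exactly one coefficient is negative), so the ``positive linear combination'' step is not literally available, whereas your pre-summed monomial form makes the nonnegativity of every exponent manifest from the ordering $\tau=i_0\le\dots\le i_N=t$ and sidesteps the sign problem entirely. Two points deserve one explicit sentence each in a polished write-up: that the pointwise limit of the increasing partial sums is finite on the stability interval (so the convexity inequality passes to the limit), and that the end-to-end object you prove convex is the union-bound quantity computed by the recursion rather than the exact Mellin transform of the $(\min,\times)$ convolution---which is, however, exactly the quantity the paper optimizes.
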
 
The proof of Theorem~\ref{thm:2} is given in Appendix~\ref{app:convexity}.

The convexity of the kernel confirms the existence of a unique optimum and motivates us to specify bound-based algorithms for optimization, i.e., transmit power allocation along the path which reaches the optimum. This is presented in Section~\ref{sec:algorithm}. 

We note here that, both Theorem~\ref{thm:1} and Theorem~\ref{thm:2} hold for \emph{any} kernel for a communication channel, whose service is defined in the SNR domain. In this paper, however, we focus our analysis on wireless fading channels and two kernel types - namely the ones specified for a Shannon- and WirelessHART-based service. 

{
\subsection{Heterogeneous Path with Cross-Traffic}

The proof of Theorem~\ref{thm:1}, given in the appendix, shows that the recursion obtained for the end-to-end delay bound results from the recursion in the Mellin transform of the joint service curve. 
As a result, the stepwise construction of a multi-hop path's service curve will further simplify the computation of other elements in \mx~network calculus, such as the delay bound. 
We demonstrate this by providing the leftover service curve for the considered path when cross-traffic is present, i.e., additional flows other than $A(\tau,t)$ are sharing the intermediate links in the path that the through-flow traverses.
Let the SNR arrival processes of the cross-traffic at each intermediate link be i.i.d. and denoted by $\mathcal{A}_c(\tau,t)=e^{k_c(t-\tau)}$, where we assume a constant arrival rate of $k_c$ bits per time slot. 
Assume further that the arrivals from the original through-flow and the cross-traffic as well as the service processes at each link are independent. 
We can then compute a bound on the Mellin transform of the end-to-end service process offered to the through-flow, i.e., the \emph{leftover service curve} using the following result:

\begin{theorem} \label{lem_2}
Consider a flow traversing a cascade of wireless fading channels. The service at each node is shared by the through-flow and an independent cross-flow characterized by the SNR arrival process $\mathcal{A}_c (\tau,t)$. Let ${\mathcal{S}_c}^\mathbb{L}(\tau,t)$ denote the end-to-end leftover service provided to the through-flow. Then, $\forall s<1$, 
\begin{equation}
\label{eq:left-over-sc}
\begin{aligned}
 \M_{\mathcal{S}_c}^\mathbb{L}(s,\tau,t) \leq & \left(\frac{\M_{g(\gamma_N)}(s)}{\M_{g(\gamma_N)}(s) - \M_{g(\gamma_m)}(s)} \cdot \M_{\mathcal{S}_c}^{\mathbb{L}\setminus \{m\}} (s,\tau,t) \right)\\
&  +\left(\frac{\M_{g(\gamma_m)}(s)}{\M_{g(\gamma_m)}(s) - \M_{g(\gamma_N)}(s)} \cdot \M_{\mathcal{S}_c}^{\mathbb{L}\setminus \{N\}} (s,\tau,t)\right),
\end{aligned}
\end{equation}

\noindent for any $m \in \{1,2,\dots ,N-1\}$ and $\M_{\mathcal{S}_c}^{\{1\}}(s,\tau,t)=\left(e^{k_c(1-s)}\cdot \M_{g(\gamma_1)}(s)\right)^{t-\tau}$.
\end{theorem}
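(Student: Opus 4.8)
The plan is to show that the leftover service curve obeys exactly the \mx convolution structure that produced Theorem~\ref{thm:1}, so that the stated recursion is inherited verbatim. First I would fix the per-link leftover service. Under blind multiplexing the through-flow is served only after the cross-flow, so in each slot $i$ the SNR-domain leftover service increment is $g(\gamma_i)\,e^{-k_c}$, and over an interval $\S_c(\tau,t)=\prod_{i=\tau}^{t-1} g(\gamma_i)\,e^{-k_c}=\S(\tau,t)/\A_c(\tau,t)$. Because $\A_c(\tau,t)=e^{k_c(t-\tau)}$ is deterministic, its Mellin transform factorizes as $\M_{\S_c}(s,\tau,t)=\A_c^{1-s}(\tau,t)\,\M_{\S}(s,\tau,t)=\big(e^{k_c(1-s)}\M_{g(\gamma)}(s)\big)^{t-\tau}$, using the stationary independent increments of $\S$. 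For a single link this is precisely the base case $\M_{\S_c}^{\{1\}}$ in the statement. Restricting to $s<1$ makes the exponent $s-1$ negative, which is exactly what turns the dropped $[\cdot]^+$ into an upper bound and keeps the Mellin transforms finite.

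Next I would express the end-to-end leftover service as the \mx convolution of the per-link leftover curves, $\S_c^{\mathbb{L}}=\S_{c,1}\otimes\cdots\otimes\S_{c,N}$, and bound its Mellin transform with the union-bound inequality of Eq.~\eqref{eq:convolution_mellin}, exactly as Eq.~\eqref{eq:mellin_transform_path} was obtained. This gives the nested-sum bound $\M_{\S_c}^{\mathbb{L}}(s,\tau,t)\le\sum_{i_1\le\cdots\le i_{N-1}}\prod_{j=1}^{N}\M_{\S_{c,j}}^{\,i_j-i_{j-1}}(s)$, with per-link factor $\M_{\S_{c,j}}(s)=e^{k_c(1-s)}\M_{g(\gamma_j)}(s)$. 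The point is that this has the identical algebraic form to the path-service Mellin transform treated in the proof of Theorem~\ref{thm:1}, with the single-link factor there replaced throughout by $\M_{\S_{c,j}}(s)$.

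The heart of the argument is then the same geometric-sum / partial-fraction manipulation used in Appendix~\ref{app:theorem_proof}. Singling out the indices attached to links $m$ and $N$ and summing the two corresponding geometric series, the paired factor $\big[(1-\M_{\S_{c,m}}x)(1-\M_{\S_{c,N}}x)\big]^{-1}$ splits as $A/(1-\M_{\S_{c,m}}x)+B/(1-\M_{\S_{c,N}}x)$ with $A=\M_{\S_{c,m}}/(\M_{\S_{c,m}}-\M_{\S_{c,N}})$ and $B=\M_{\S_{c,N}}/(\M_{\S_{c,N}}-\M_{\S_{c,m}})$. The $A$-term collapses the chain onto the sub-path with link $N$ deleted, i.e.\ it equals $\M_{\S_c}^{\mathbb{L}\setminus\{N\}}$, and the $B$-term onto the sub-path with link $m$ deleted, i.e.\ $\M_{\S_c}^{\mathbb{L}\setminus\{m\}}$. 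Matching these collapsed nested sums to the correct reduced sub-paths for an arbitrary intermediate index $m$, and carrying the index bookkeeping through cleanly, is the one place that needs care, and it is exactly the step already performed for Theorem~\ref{thm:1}.

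Finally I would invoke the cancellation that makes the result clean: both $\M_{\S_{c,m}}$ and $\M_{\S_{c,N}}$ carry the common factor $e^{k_c(1-s)}$, which cancels in the ratios $A$ and $B$, leaving $B=\M_{g(\gamma_N)}(s)/(\M_{g(\gamma_N)}(s)-\M_{g(\gamma_m)}(s))$ and $A=\M_{g(\gamma_m)}(s)/(\M_{g(\gamma_m)}(s)-\M_{g(\gamma_N)}(s))$. Substituting these coefficients and the identifications above into the decomposition yields Eq.~\eqref{eq:left-over-sc}. The main obstacle is therefore not the presence of cross-traffic --- which only contributes the benign factor $e^{k_c(1-s)}$ that cancels --- but the partial-fraction reduction of the nested sum; since that reduction is already established in the proof of Theorem~\ref{thm:1}, the cleanest route is to reuse it verbatim under the substitution $\M_{g(\gamma_j)}(1-s)\mapsto\M_{\S_{c,j}}(s)$ and verify the cancellation.
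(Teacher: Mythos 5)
Your proposal is correct and follows essentially the same route as the paper: establish the single-link leftover Mellin transform $\M_{\mathcal{S}_c}^{\{1\}}(s,\tau,t)\leq\left(e^{k_c(1-s)}\M_{g(\gamma_1)}(s)\right)^{t-\tau}$ via the quotient/Mellin identity (with $s<1$ handling the dropped positive part), then substitute $e^{k_c(1-s)}\M_{g(\gamma_j)}(s)$ for the per-link service factor in the nested-sum/geometric-series derivation of Theorem~\ref{thm:1} and note that the common $e^{k_c(1-s)}$ factor cancels in the partial-fraction coefficients. Your write-up in fact spells out these last two points (the role of $s<1$ and the cancellation) more explicitly than the paper does.
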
 

\begin{proof}
 According to \emph{Lemma 1} in~\cite{alzubaidy_ton} we obtain the Mellin transform of the leftover service curve for a single channel:
\begin{equation}
\label{eq:single_channel_left_over_sc}
\begin{aligned}
\M_{\calSc}^{\{1\}}(s,\tau,t) & =\M_{\nicefrac{\calS}{{\cal{A}}_c}}(s,\tau,t)\\
& = \M_{g(\gamma_1)}(s,\tau,t)\cdot \M_{{\cal{A}}_c}(2-s,\tau,t) \\
& \leq \left(e^{k_c(1-s)}\cdot \M_{g(\gamma_1)}(s)\right)^{t-\tau},
\end{aligned}
\end{equation}
since the Mellin transform of a quotient of two independent random variables $X$ and $Y$ is given by $\M_{\nicefrac{X}{Y}}(s)=\EE[X^{s-1}]\EE[Y^{1-s}]=\M_{X}(s)\cdot\M_{Y}(2-s)$.

Substituting $e^{k_c(1-s)}\M_{g(\gamma_N)}(s)$ for $\M_{g(\gamma_N)}$ in the joint service curve derivation given in Appendix~\ref{app:theorem_proof}, the expression given by Eq.~\eqref{eq:left-over-sc} follows by applying the recursion to the leftover service curve of path $\mathbb{L}$, $\M_{\calSc}^{\mathbb{L}}(s,\tau,t)$.
\end{proof}
 }

\section{Bound-Based Power Minimization Algorithms}
\label{sec:algorithm}
As already mentioned, an essential aspect of industrial wireless networks, besides the importance of QoS-awareness, is their energy-efficient operation. 
Especially for battery-powered network devices, often attached to machines in order to control them or measure their functional status, it is important to prolong network partitioning time by extending nodes' battery lifetime. 
Since the radio chip is usually one of the largest consumers of energy in low-power networks~\cite{khader_willig}, one way of providing energy efficiency is to minimize the transmit power, as one of the easily modifiable parameters in wireless transceivers.
In addition, minimizing transmit power not only increases energy savings, but also reduces potential interference to neighboring transmissions. 
In a multi-hop setting, power optimization mainly needs to take two issues into account.
On the one hand, heterogeneous link statistics can be exploited to reduce power consumption.
On the other hand, heterogeneous battery states (i.e., energy levels) affect the transmit power setting.
Thus, in this section we develop and present algorithms that take these effects into account in order to minimize transmit power or maximize network lifetime under statistical end-to-end constraints as represented by the above presented bound. 

\subsection{Transmit Power Minimization Algorithm}
\label{subsec:power_min_alg}

We initially raise the following question: What is the optimal average SNR, i.e., minimal sum transmit power needed on all links along a path to meet a target end-to-end delay $w^{\varepsilon}$ with probability $1-\varepsilon$? 
This question is difficult to answer, as for the violation probability $\varepsilon$ no accurate analytical model exists (to date) that can relate it precisely to the average SNRs.
The only option is to resort to system simulations to determine the corresponding SNRs.
In contrast, we propose to base system optimization on the multi-hop delay bound, as presented above. 
Hence, we are interested in the solution of the following optimization problem for a given multi-hop path $\mathbb{L}=\{1,..,N\}$:
\begin{equation}
\begin{aligned}
&\min &  &\sum_{j=1}^N {\Ptx}_j, \nonumber \\ 
&\text{s. t.}
 &  &\inf_{s>0}  \{ \K^{\mathbb{L}}(s,-w^{\varepsilon})\} \leq \varepsilon \,  \nonumber \\ 
  &  & & \max_{j} \left( \M_{\alpha}\left(1 + s \right) \cdot \M_{g\left(\gamma_j\right)}\left(1 - s \right)  \right) < 1 \label{eq:st_cond_multi}. 
\end{aligned}
\end{equation}
Due to the complexity of the kernel function and the stability condition (see Eq.~\eqref{eq:delay_kernel_rayleigh}), no analytical solution for the bound-based optimal SNRs can be derived. 
Instead, we propose a binary search algorithm in two dimensions (along $s>0$ and along the SNRs) to solve the given minimization problem (see Algorithm~\ref{fun:power_min_short} in Appendix~\ref{app:pseudo_codes}). 
Notice that minimizing the SNRs $\gammabar_j$ leads to minimization of the transmit power per hop, since $\gammabar_j=\nicefrac{p_j}{\sigma^2}$. 

As already stated in Theorem~\ref{thm:2}, the kernel described by Theorem~\ref{thm:1} is convex in $s$ for $s \in (0,b)$, where $b$ is the last point for which the stability condition in Eq.~\eqref{eq:stability_cond} holds. From this it follows that the proposed algorithm results in a global bound-based minimum.
Figure~\ref{fig:convexity_diff_SNR} illustrates the kernel of several links with different average SNR, where the instantaneous channel capacity is given by Eq.~\eqref{eq:Shannon}. The figure shows that the delay bound function of a single link $\K(s,-w)$ is convex in $s$ and monotone in $\gammabar$. For the computation of the kernel in the figure, we assumed a block-fading wireless link with constant arrival rate and random service increments that are characterized by the Shannon capacity.
\begin{figure}
\centering
  \includegraphics[scale=0.55]{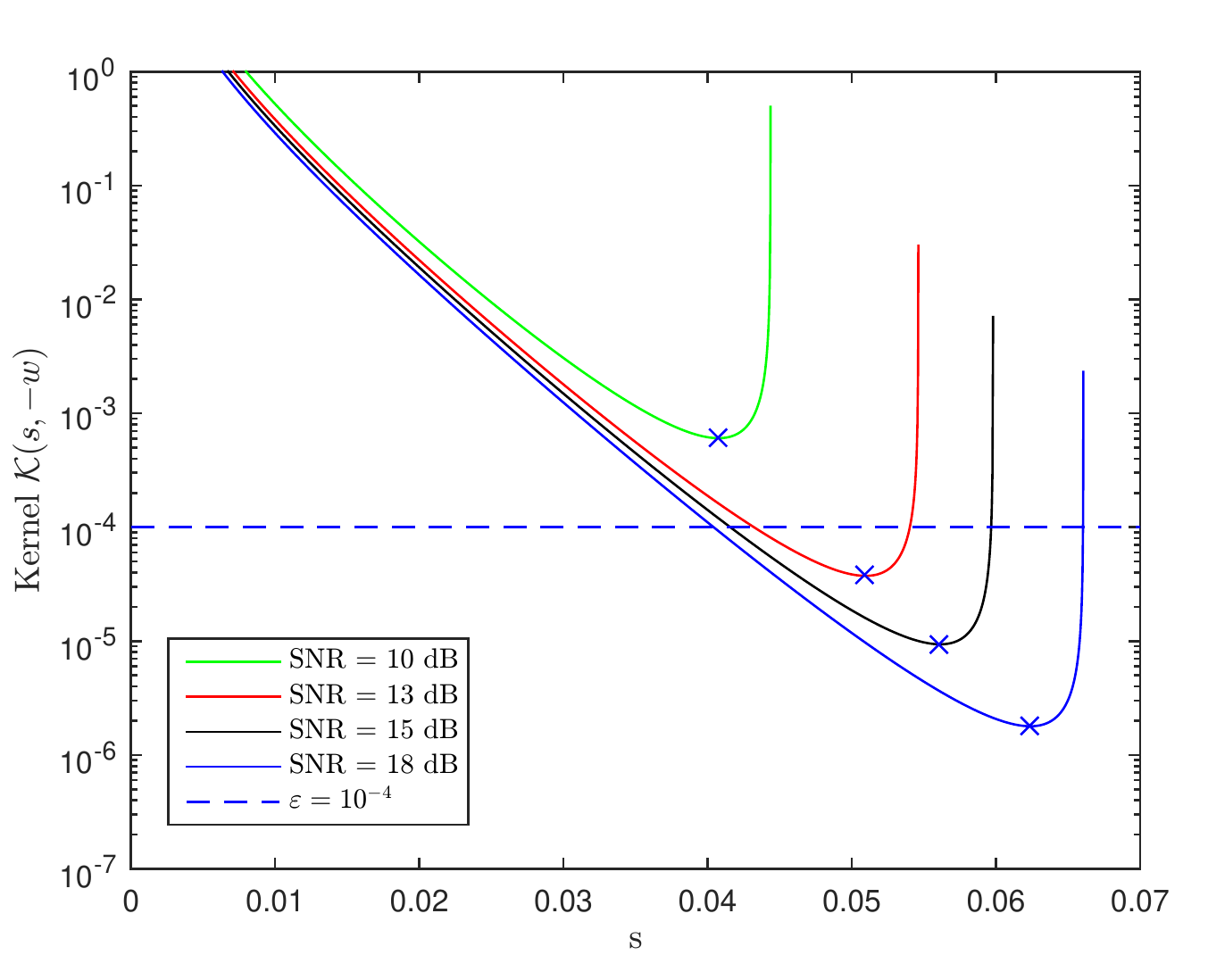}
\caption{The delay bound function $\K(s,-w)$ is convex in $s$. It is obtained for $r_a=50$ bits per time slot and target delay $w=5$ time slots. Its minimum is marked with a cross and shifts to the right as the average SNR on the link is increased.}
\label{fig:convexity_diff_SNR}
\end{figure}
We further notice that, as the SNR either increases or decreases, the optimal $s^{*}$ (which minimizes the delay bound function) moves to the right or to the left, repectively. 


For any given transmit power vector $\vPtx = \left\{p_1, \ldots p_n\right\}$ and resulting fixed SNR vector $\vgammabar = \left\{{\bar{\gamma}}_1, \ldots {\bar{\gamma}}_n\right\} $, the value $s^{*}$ for which $\K^{\L}(s,-w)$ is minimal, is determined by performing a binary search along the interval $(0,b)$.
The main idea here is to cut the interval $(0,b)$ into four areas through fixing five points (see Fig.~\ref{fig:show_s_points}), where $s_\mathrm{m}$ is the middle point of $(0,b)$. 
Based on this partition, the algorithm traces the gradients and splits the range where the minimum of $\K^{\L}(s,-w)$ is located.
The function is called recursively until the smallest size of an interval has been reached, defined with the input parameter $\Deltamin$. 
At this point, the middle point $s_\mathrm{m}$ of the last considered partition is returned as $s^{*}$, i.e., as the point $s$ for which $\K^{\L}(s,-w)$ reaches its minimum.

\begin{figure}
\centering
  \includegraphics[scale=0.6]{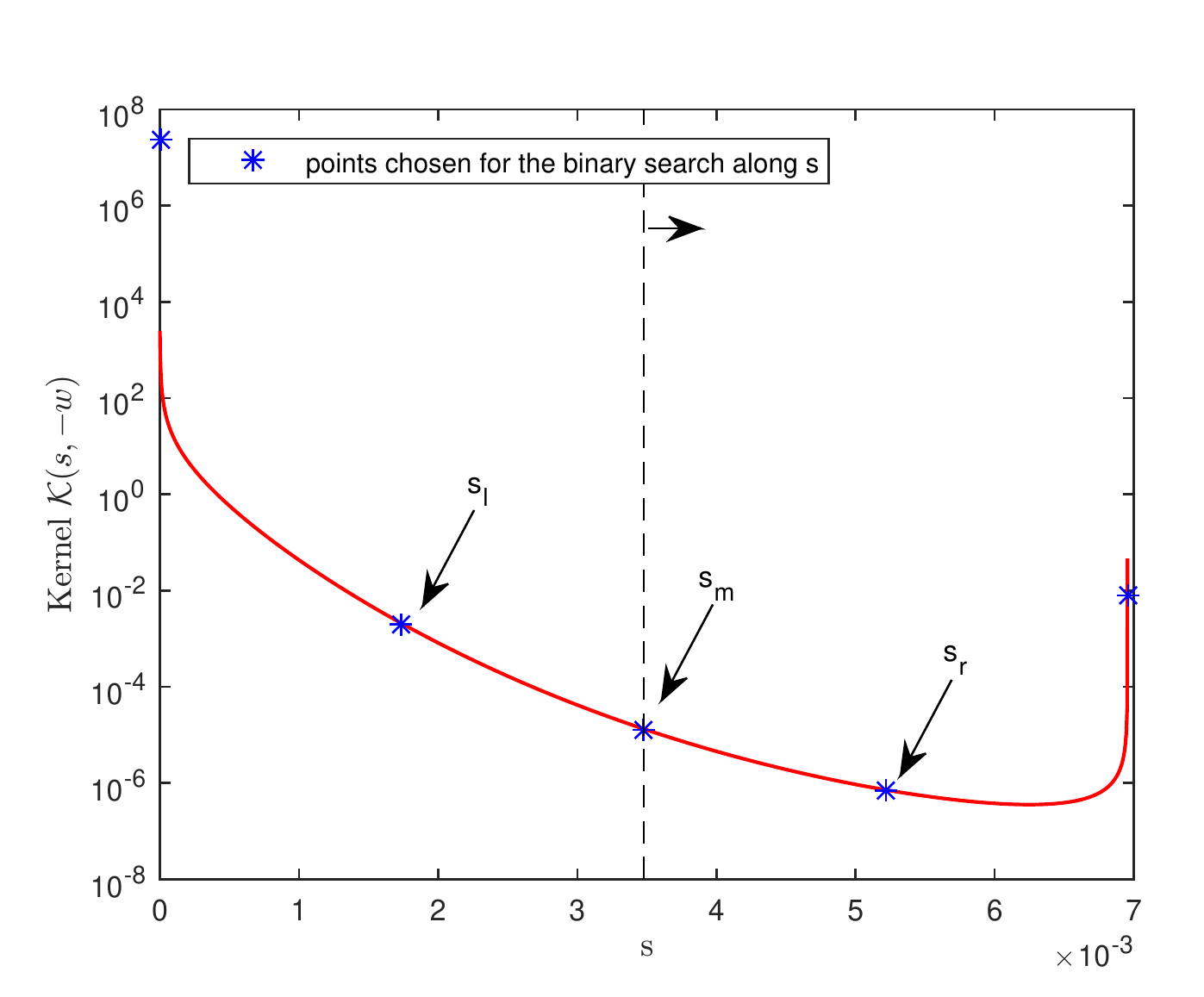}
\caption{The chosen points within the interval $(0,b)$ for the binary search along dimension $s$.  $s_\mathrm{l}$ and $s_\mathrm{r}$ mark the middle of the left and right half of the interval, respectively, while $s_{\mathrm{start}}$ and $s_{\mathrm{end}}$ represent 0 and $b$, respectively.}
\label{fig:show_s_points}
\end{figure}

For the search in the second dimension along the $\gammabar$ dimension (see Algorithm~\ref{fun:search_s} in Appendix~\ref{app:pseudo_codes}), we start by allocating a predefined maximal transmit power $p_{\mathrm{max}}$ to each node along the path. 
In each iteration, the gradient of the end-to-end kernel $\K^{\L}(s^{*}, -w)$ is computed for every link on the path, i.e., \mbox{$\forall n \in \{1,..,N\}$}. The smallest gradient defines the link $n$ whose transmit power is going to be changed in that iteration.
$\vgammabar_n$ is the vector of average SNRs per link in which the $n$-th SNR is replaced by the SNR obtained when $p_{n}$ is decreased for some predefined $\deltaPtx$ (line~\ref{line:gradient} in Algorithm~\ref{fun:power_min_short} Appendix~\ref{app:pseudo_codes}). 
In each iteration a new kernel is computed (denoted with $\hat{\varepsilon}$). 
In case the newly computed kernel is bigger than the target violation probability $\varepsilon$, $\deltaPtx$ is halved, so that a decrease of the transmit power is further possible.
 $p_n$ is halved until it reaches a predefined minimal value $\deltaPtxmin$. 
 The algorithm returns the current vector $\vPtx$ as an optimal one, either when all links are assigned with the minimal possible transmit power or the smallest possible $\deltaPtx$ has been reached and the transmit power along the links cannot be further reduced. 
 It may happen though, that the target delay cannot be met. 
 In the optimal case the algorithm exits when the obtained kernel has approached the target violation probability from below, i.e., $\hat{\varepsilon} \in (\varepsilon-\deltaEpsilon,\varepsilon)$ for some predefined $\deltaEpsilon$. 

The obtained bound-based solution for $\vgammabar$ and $\vPtx$ is quasi-optimal, since the binary-search algorithm approaches to the optimal $s^{*}$ and the target $\varepsilon$. 
Nevertheless, the input parameters $\deltaEpsilon, \deltaPtxmin$ and $\Ptxmax$ can be used to make a trade-off between the algorithm's precision and its performance, i.e., the needed number of iterations to reach an optimal solution.


\subsection{Network Lifetime Maximization Algorithm}
\label{subsec:lifetime_max_alg}

For industrial automation applications, reducing the transmit power results into lower interference with neighbouring networks, which leads  to  better coexistence of multiple wireless technologies within the same area. Furthermore, it  increases the energy-efficiency of the wireless network, which is crucial for applications where battery-powered devices are used. Early battery exhaustion will cause a shorter overall network lifetime as well as a potential premature network partitioning.  Therefore, extending battery lifetime is another important aspect in the performance analysis of industrial wireless applications. 

In applications where network lifetime is more important than pure energy saving, the bound-based power-minimization algorithm defined in the previous section may not be ideal. It is worth noting that in the case of heterogeneous multi-hop networks, the proposed algorithm may result in an unequal depletion of the battery energy levels at different hops, which may result in shorter network lifetime. An alternative approach is to allow nodes with lower energy levels to use lower transmit power, which may result in an increased delay, while other nodes pick-up the slack by increasing their transmit power to compensate for the extra delay introduced by that node.
We therefore propose a separate algorithm for lifetime maximization, which is a modification of the bound-based power-minimization algorithm defined in the previous section, which aims at maximizing the network lifetime. 
In this work, we assume a network (or a path) is no longer useful when any of the nodes' battery is fully depleted. Nevertheless, the algorithm can also be used, with few modifications, to handle more resilient wireless networks design. 
The difference from the previous algorithm is mainly related to the decision regarding whose link's transmit power to decrease in each subsequent iteration of the algorithm. In the new algorithm, it is redefined into choosing the transmitter that has the least charged battery at that moment of time. For this purpose we look at the battery full states (denoted by the vector $\vec{\mathrm{B}}$) of the nodes along the path $\L$. The goal is to maximize the minimal battery lifetime (or duration of battery operation) $\theta_j$ among all batteries. Each relay node can be in one of the following states: idle, send and receive. The battery consumption during the idle and the receive phase is dependent on the transceiver, while the energy consumption in the sending state is mainly dictated by the transmit power. Notice further, that the source node cannot be in receive mode, while the destination does not send packets and is therefore excluded from the decision process. Having a time slotted system with slot length $T$, the rest of the time slot assigned to a node, in which it neither sends nor receives the packet, is spent in an idle state. 

To handle the effect described above, we formulate the following bound-based network lifetime maximization problem:  
\begin{equation}
\begin{aligned}
& &  &\max\min_{j=1 \dots N}\{\theta_j\}, \theta_j=\frac{B_j}{p_j \cdot T} \nonumber \\ 
&\text{s.t. }
 &  &\inf_{s>0}  \{ \K^{\L}(s,-w^{\varepsilon})\} \leq \varepsilon \,  \nonumber \\ 
  &  & & \max_{j} \left( \M_{\alpha}\left(1 + s \right) \cdot \M_{g\left(\gamma_j\right)}\left(1 - s \right)  \right) < 1. 
\end{aligned}
\end{equation}
\noindent In comparison to the transmit power minimization algorithm presented in Sec.~\ref{subsec:power_min_alg}, the network lifetime maximization algorithm selects the node with the minimal battery duration (the vector of battery durations is denoted by $\vec{\mathrm{\theta}}$) as a candidate whose transmit power will be reduced in that iteration. The assigned transmit power can be selected in the interval $(\Ptxmin,\Ptxmax)$, both depending on the chosen hardware. Similar to the gradient-based algorithm, the transmit power is reduced in steps of $\deltaPtx$ until the resulting delay bound function (computed using Theorem~\ref{thm:1}) is bigger than the target one. Each time this is the case, $\deltaPtx$ is halved until some predefined $\deltaPtxmin$ has been reached. The pseudo-code of the network lifetime algorithm is given in Algorithm~\ref{fun:battery_alg} in Appendix~\ref{app:pseudo_codes}. Among the above mentioned parameters, the QoS requirements $w$ and $\varepsilon$, together with the payload size $r_a$ and the maximal frame size $k_a$, that can be sent per time slot on the channel, are input parameters to the algorithm.  


\section{Numerical Evaluation}
\label{sec:num_res}

In this section, we present numerical evaluations of the power minimization algorithms based on the end-to-end delay bound over heterogeneous links. 
In the following subsections we then focus on the power minimization algorithms. 
In Sec.~\ref{subsec:power-min} we evaluate our suggested algorithm for various path compositions. 
In Sec.~\ref{subsec:lifetime-max} we present results that correspond to network's lifetime maximization, considering a more realistic transceiver node model in a WirelessHART network based on the IEEE 802.15.4 standard~\cite{whart_online}. 
All presented results rely on Theorem~\ref{thm:1}, which was validated via simulations. In \cite{icc15} we show that the provided closed-form solution for the end-to-end delay bound is indeed an upper bound of the simulated delay violation probability for various multi-hop scenarios, observing a gap of approximately one order of magnitude. It is important to note that, the decay rate of the computed delay violation probability is exactly the same as the one obtained via simulations, which suggests that the bound is asymptotically tight. We refer the interested reader to \cite{icc15} for a detailed description of the validation. 


\subsection{Evaluation of the Power-Minimization Algorithm}
\label{subsec:power-min}
We now turn to the evaluation of the bound-based transmit power minimization algorithm presented in Section~\ref{subsec:power_min_alg}.  
Recall that the algorithm minimizes the total transmit power over all links of a multi-hop path based on the analytically determined end-to-end kernel according to Theorem~\ref{thm:1}. 
The target end-to-end delay violation probability $\varepsilon$ and delay $w$ are the QoS parameters passed to the algorithm.

\subsubsection{Methodology}
\label{subsubsec:power_min_methodology}
Through analytical evaluations, we benchmark the minimum total transmit power algorithm for various different scenarios, characterized by different path compositions.
We consider in general Rayleigh-fading links with different mean SNRs ${\bar{\gamma}}_{n}$. 
Also, we consider the Shannon capacity model to map a link's SNR to its service capacity.
The arrival flow in this investigation is fixed to $r_a=20$ bits per time slot.
We express link heterogeneity for a path $\mathbb{L}$ consisting of $N$ links using the norm of the vector $\mathbf l = (l_1, \dots, l_N)$, denoted by $R^{\mathbb{L}}$ and given by:
\begin{equation}
\label{eq:path_norm}
R^{\mathbb{L}}=\sum_{n=1}^N \sum_{m=n+1}^N | l_n - l_m | \; \mathrm{,}
\end{equation}
where $l_n$ denotes the length of link $n$, which reflects the path loss of the corresponding link and hence its service. 
Obviously, higher norm reflects higher link heterogeneity and vice versa.
In the following, we consider 3-hop $(N=3)$ paths with various node placements between a source and a destination located 60 m apart. 
Table~\ref{tab:paths} shows the exact scenarios (from almost homogeneous to strongly heterogeneous in ascending order) and their respective path norms used in the evaluations.  These scenarios are deliberately chosen to highlight the effect of link heterogeneity and relative distances between intermediate nodes on network performance and the power gain obtained using the proposed power minimization algorithm compared to a naive power allocation.  
\begin{table}[h]
\centering
\caption{Considered Path Compositions}
\label{tab:paths}
\begin{tabular}{ c  c }
  Link lengths in [m] & Path norm $R$ \\ \hline
  $[20, 19, 21]$ & 4 \\ 
  $[20, 30, 10]$ & 40 \\
  $[5, 28, 27]$ & 46 \\
	$[20, 35, 5]$ & 60 \\
	$[5, 40, 15]$ & 70 \\
	$[5, 50.5, 4.5]$ & 92 \\
\end{tabular}
\end{table}
Note that in the following we refer to the link with the longest distance as the \emph{critical link} (the link characterized with the highest path-loss). 

In order to evaluate the efficiency of our algorithm, we are in particular interested in the total power reduction it can achieve in comparison to other approaches.
For this, we consider two different comparison schemes that allocate a \textit{uniform} power value to all links:
\begin{itemize}
\item \emph{QoS-agnostic}: Each node along the path is assigned the same transmit power without considering QoS. In the numerical evaluation we use for this value the maximum available transmit power value of an IEEE 802.15.4 low-power transceiver~\cite{atmel}, which equals $\Ptxmax=4$ dBm. 
\item \emph{QoS-aware}: In this scheme, the transmit power is iteratively reduced equally for all nodes - starting from the maximal transmit power $\Ptxmax$ - until the obtained delay violation probability is larger than the target one ($\varepsilon$). Hence, as in the previous case with the QoS-agnostic scheme every node is assigned the same transmit power, however, the allocation is typically lower than $\Ptxmax$.
\end{itemize}
For all considered scenarios, we compute the minimum total transmit power as obtained from our algorithm, and compute afterwards the saving ratio or the power gain (in percent) that can be obtained in comparison to the QoS-agnostic scheme or the QoS-aware scheme.
A saving of $50 \%$ hence indicates that through our power minimization algorithm the total transmit power is half of the value resulting from the comparison scheme. 

\subsubsection{Numerical Results}
In Fig.~\ref{fig:shannon_sum_energy} we present the absolute required total transmit power in [mW] of our proposed algorithm for the discussed path scenarios over an increasing target delay when fixing the target delay violation probability.
As the target delay is increased, the required total transmit power decreases.
In addition, note that the total transmit power is higher for higher link heterogeneity.
This is due to the critical link which dominates the total transmit power consumption on the path and for which the delay can only be compensated for by other links up to a certain point.
Note that with a maximum transmit power of $4~\mathrm{dBm}$ per node, the total transmit power along the path equals to $7.5357~\mathrm{mW}$.
\begin{figure}
\centering
  \includegraphics[scale=0.6]{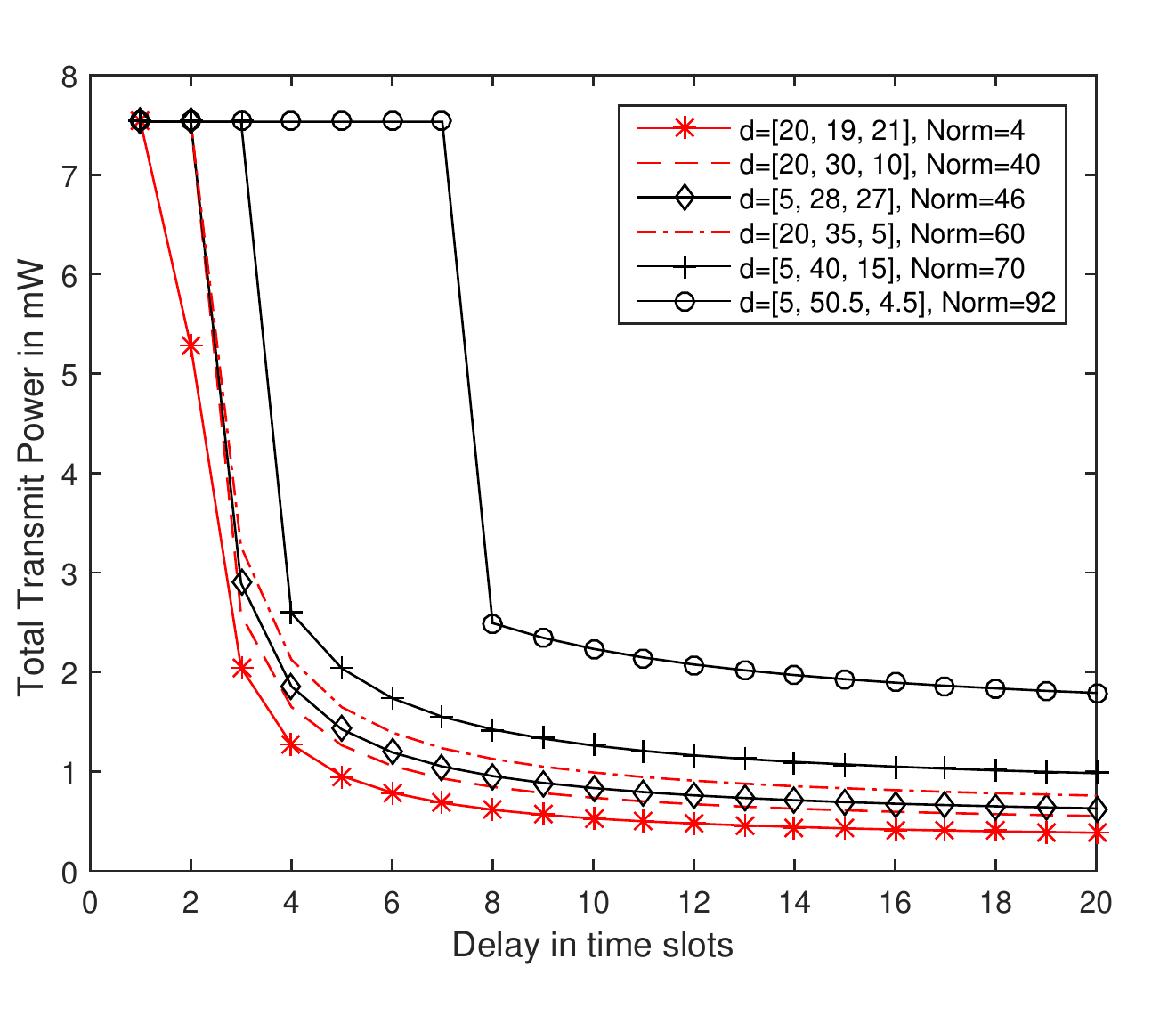}
\caption{Sum of the total transmit power for different 3-hop paths resulting from the power-minimization algorithm over an increasing delay target. The target delay violation probability is fixed at $\varepsilon=10^{-3}$.}
\label{fig:shannon_sum_energy}
\end{figure}

We next present the saving gains - in Fig.~\ref{fig:shannon_4dbm} in comparison to the QoS-agnostic scheme and in Fig.~\ref{fig:shannon_qos} in comparison to the QoS-aware scheme. 
For both figures we consider the same path compositions as above and vary the target delay while keeping the target delay violation probability fixed at $\varepsilon=10^{-3}$. 
\begin{figure}
\centering
  \includegraphics[scale=0.6]{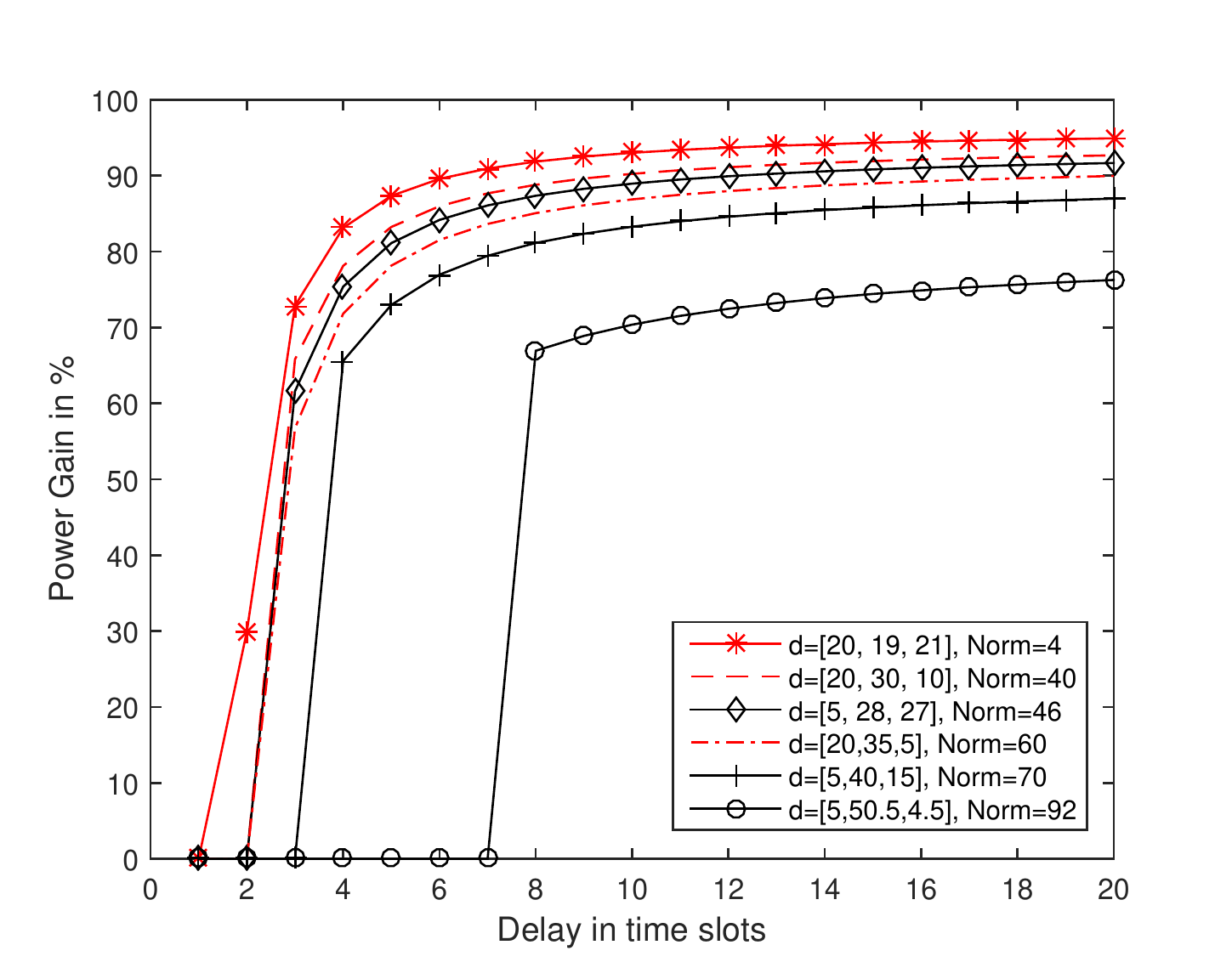}
\caption{Saving gain of the proposed power minimization algorithm in comparison to the QoS-agnostic scheme for an increasing target delay for various 3-hop path compositions. The target delay violation probability is fixed at $\varepsilon=10^{-3}$.}
\label{fig:shannon_4dbm}
\end{figure}
In Fig.~\ref{fig:shannon_4dbm} we observe initially that all saving gains increase for an increasing target delay.
This is a direct consequence from Fig.~\ref{fig:shannon_sum_energy}, as those values are compared to a fixed value of $7.5357~\mathrm{mW}$ in order to determine the saving gain.
Hence, it is also not surprising that the saving gain increases for more homogenous paths. 
In absolute terms, the saving gains are in the range of $70 \%$ to $90\%$ which nevertheless shows the potential of the proposed algorithm.
If we switch over to the saving gains in comparison to the QoS-aware scheme different observations can be made (see Fig.~\ref{fig:shannon_qos}).
\begin{figure}
\centering
  \includegraphics[scale=0.6]{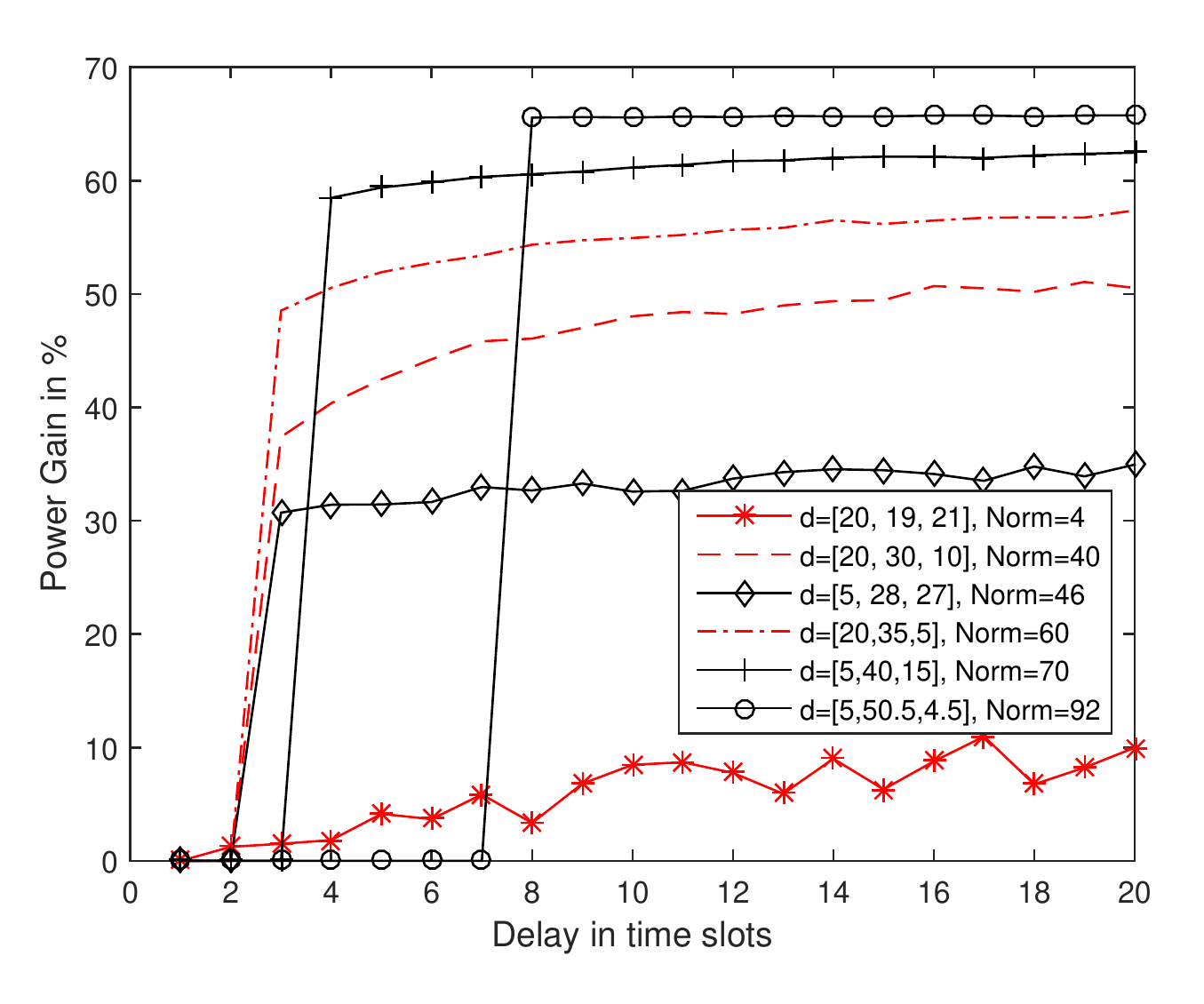}
\caption{Saving gain of the proposed power minimization algorithm in comparison to the QoS-aware scheme for an increasing target delay for various 3-hop path compositions. The target delay violation probability is fixed at $\varepsilon=10^{-3}$.}
\label{fig:shannon_qos}
\end{figure}
Now the total power consumption varies as well for the comparison case, i.e., it drops in general for the larger target delays, while it also drops for paths with more homogenous link compositions, as otherwise the critical link in strongly heterogeneous paths dominates the power consumption and delay behavior.
Therefore, in comparison to a QoS-agnostic comparison scheme, our algorithm now provides better saving gains in case of strongly heterogeneous path compositions, as only they can be significantly exploited by the proposed algorithm. 
In absolute terms, this leads to saving gains in the range of $10\%$ (in case of strongly homogeneous links) up to $70\%$ in case of strongly heterogeneous links.
Again, the power gain increases as the target delay grows.

\begin{figure}
\centering
  \includegraphics[scale=0.6]{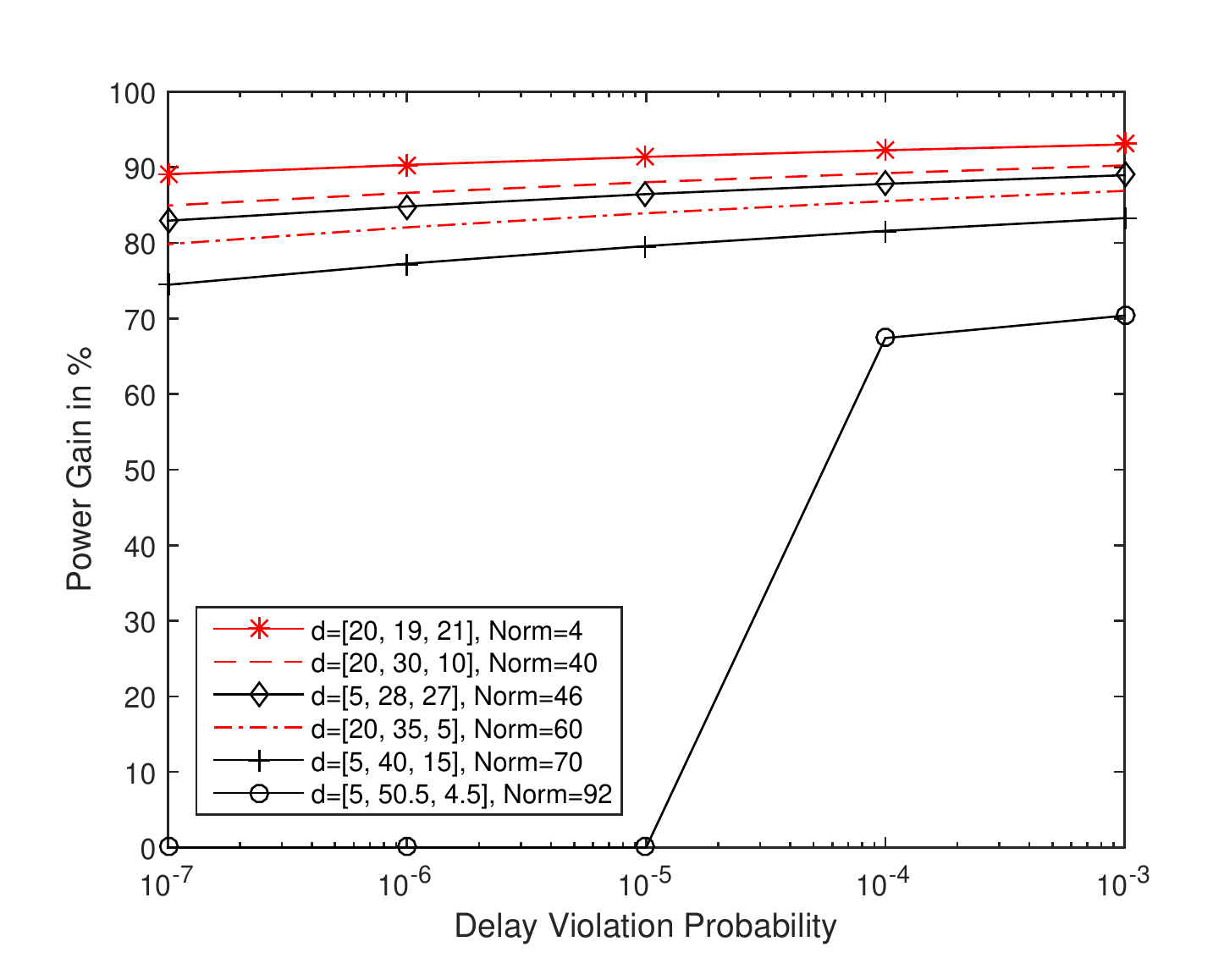}
\caption{Saving gain of the proposed power minimization algorithm in comparison to the QoS-agnostic scheme for an increasing target delay violation probability for various 3-hop path compositions.  The target delay is fixed at $w=10$ time slots.}
\label{fig:shannon_energy_gain_epsilon}
\end{figure}
Finally, in Fig.~\ref{fig:shannon_energy_gain_epsilon} we present the saving gain in comparison to the QoS-agnostic scheme in case of an increasing target delay violation probability for a fixed target delay of $w  = 10$ time slots.
We notice the same trend as in Fig.~\ref{fig:shannon_4dbm}: The smallest saving gain is around 75\% for various $\varepsilon$. 
The bigger the target violation probability, the bigger is the saving gain.
Also, the more heterogeneous the paths are, the smaller is the saving gain. 
The path with the highest norm meets the target delay for $\varepsilon \geq 10^{-4}$ with a gain of approx. 70\%.

\subsubsection{ Bound-Based vs. Simulation-Based Power-Minimization}
\label{subsec:vs_shannon} 

Although the proposed power minimization is first of its kind, it only provides a suboptimal solution since the solution is obtained by optimizing a delay bound instead of the exact delay expression which is unfortunately unattainable. Therefore, the performance gap between the obtained bound-based power allocation scheme and the real optimum, which can only be obtained using simulation, becomes relevant for our investigation. In this section, we conduct an extensive  simulation study to investigate this gap. We simulate the application in question and determine a `simulation-based' optimal power allocation. We then plot the gap between the two resulting power allocations.

For this purpose, we run  Algorithm~\ref{fun:power_min_short} with parameters $r_a=30$ bits,  $C=20$ symbols per time slot and for a target delay violation probability $\varepsilon=10^{-3}$. The resulting power allocation is set as an initial value at each simulation. Similarly as above, we observe paths of different heterogeneity with norm $R \in \{4, 46, 60, 70\}$. In each iteration the delay violation probability was obtained by a simulation. The simulation follows the same approach as defined in the algorithm: by computing the minimal gradient of the simulated delay violation probability (as in line~\ref{line:reduce_Ptx} in Algorithm~\ref{fun:power_min_short}), the link whose transmit power has to be reduced in each iteration by a certain $\deltaPtx$ is determined. Whenever no further reduction of the transmit power on any of the links is possible, i.e., the resulting delay violation probability is bigger than the target one, $\deltaPtx$ is halved. This is done maximum 15 times, i.e., $\deltaPtx$ is reduced by a factor of $2^{15}$. The initial value of $\deltaPtx=0.01~\mathrm{mW}$.
\begin{figure}
\centering
  \includegraphics[scale=0.6]{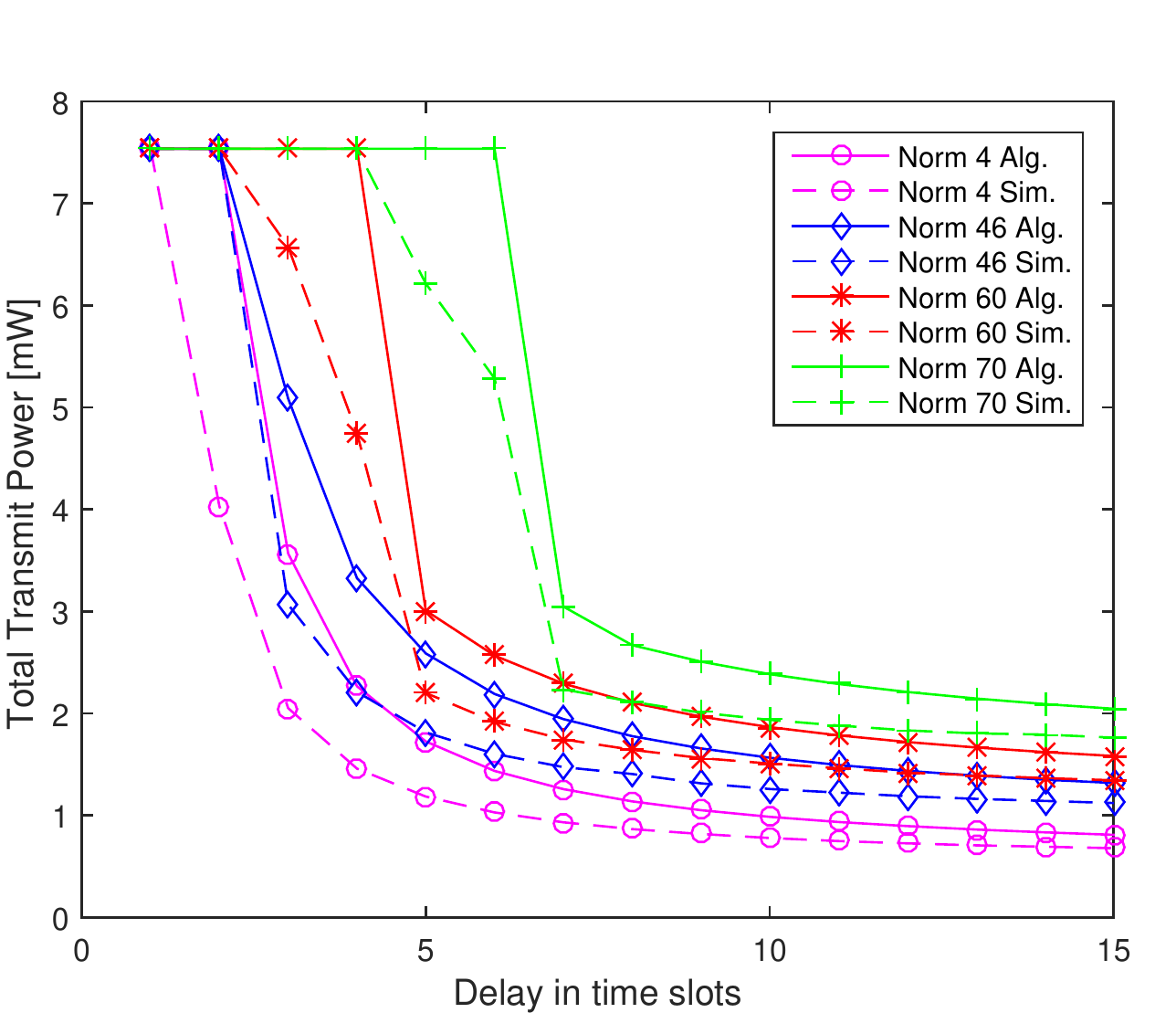}
\caption{Saving gain of the proposed power minimization algorithm in comparison to the QoS-agnostic scheme for an increasing target delay violation probability for various 3-hop path compositions.  The target delay is fixed at $w=10$ time slots.}
\label{fig:additional_total_power}
\end{figure}

In this way, we continue reducing the transmit power on the links along the path, beyond the one suggested by the algorithm. This approach enables us to characterize the gap between the total transmit power computed by the proposed  bound-based power minimization in comparison to the total transmit power obtained via simulations of the analogous process, s.t. the resulting simulated delay violation probability is close to $\varepsilon$ from below. We illustrate this additional power saving in Fig.~\ref{fig:additional_total_power} and Fig.~\ref{fig:additional_power_gain}. The former figure depicts the difference between the total transmit power obtained by the algorithm and the one obtained with simulations in absolute terms. A total transmit power of $7.5357~\mathrm{mW}$ represents the cases for which applying even the maximal possible transmit power of approx. $2.5~\mathrm{mW}$ per node does not result into meeting the desired target delay. We notice that, in absolute terms, the additional decrease of the total transmit power while performing system-based optimization is rather small for all scenarios. 
\begin{figure}
\centering
  \includegraphics[scale=0.6]{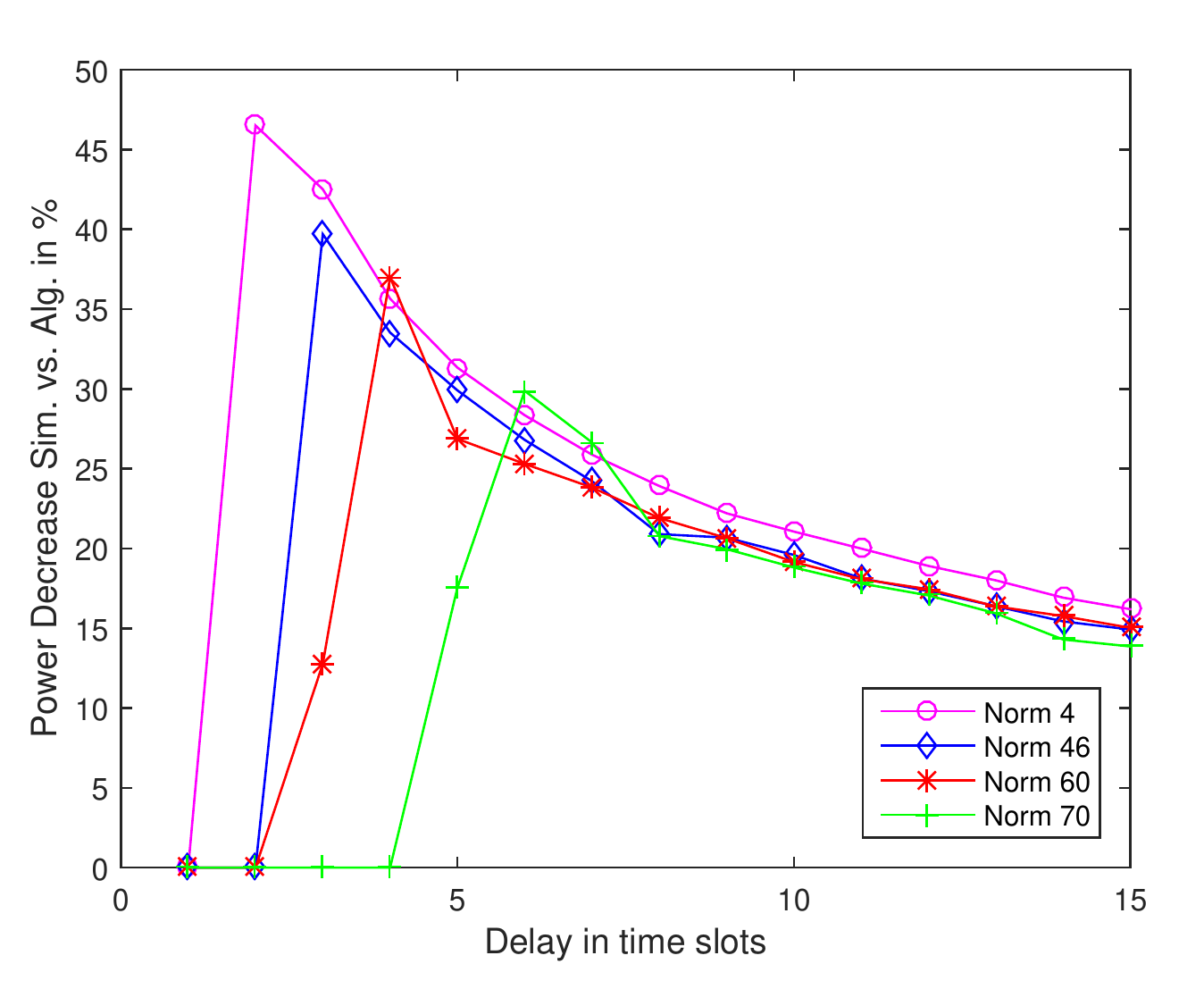}
\caption{Additional decrease of the total transmit power suggested by the  bound-based algorithm when performing simulations. The target delay violation probability is $\varepsilon=10^{-3}$.}
\label{fig:additional_power_gain}
\end{figure}
Fig.~\ref{fig:additional_power_gain} represents the additional power gain w.r.t. the total transmit power obtained from the described approach. As it can be seen, while the power gain is as big as 47\% for paths of low heterogeneity and small target delays, it converges to around 15\% for longer delays for diverse path heterogeneity. This means, that the actual system optimum, if obtained by simulations, will enable an additional 15\% decrease in the total transmit power of the path in comparison to the  bound-based power minimization. This power gap is generally significantly smaller for more heterogeneous paths. 

Note however, that, even if a further decrease of the total power by more than 15\% is observed for various scenarios, the proposed  bound-based power-minimization algorithm can be used as a good estimate on the system performance, providing at a same time delay guarantees. Moreover, applying the suggested transmit power allocation leads to an improved performance, due to the lower resulting delay violation probability. 
Such bound-based optimization is especially important for applications with strict QoS demands, introducing a so called \emph{safety gap} in the actual system performance.

\subsection{Evaluation of the Lifetime Maximization Algorithm}
\label{subsec:lifetime-max}
We evaluate in this subsection the second proposed algorithm from Sec.~\ref{subsec:lifetime_max_alg}, which takes 
the battery state of the nodes into account and maximizes the lifetime of the network by modifying the transmit power settings per node.
As this scenario and objective is more relevant in practice, we also consider a more practical channel capacity function in this section and resort to the WirelessHART industrial standard \cite{whart_online}, widely used for process automation applications with battery-powered devices. 
In order to apply our proposed algorithm, we use the provided corresponding kernel given in \cite{pe-wasun16}, defined according to the physical layer description and BER stated in the IEEE 802.15.4-2006 standard \cite{802_15_4_standard}\footnote{Note that this is no longer the active standard, since the 802.15.4-2015 is the newest version. However, the WirelessHART radios comply with IEEE 802.15.4-2006.}.
In the following subsections we first explain our methodology and then discuss some numerical results presenting insights on how QoS-aware power management can improve network lifetime under both link and battery state heterogeneity.

\subsubsection{Methodology}
\label{subsec:WirelessHART_methodology}

Let $N_s$ be the number of time slots within a superframe, while a time slot lasts for $T=10$ ms according to WirelessHART.
We hence present the delay in number of superframes, where a superframe lasts for $T\cdot N_s$ ms. 
We assume a round-robin link scheduling fashion, where the $n$-th time slot within one superframe is assigned to the $n$-th link along the path, while the channel gain varies randomly in each time slot, i.e., we assume block fading.

As shown in~\cite{pe-wasun16}, the kernel of a single-hop WirelessHART system is given by:
\begin{equation}
\label{eq:kernel_whart}
  \K(s,-w)=\frac{\left(1+ (e^{-k_as}-1)Q(\gammabar)\right)^{w} }{1-e^{r_as}\left(1+ (e^{-k_as}-1)Q(\gammabar)\right)},
\end{equation}
under the stability condition 
\begin{equation}
\label{eq:stability_whart}
\begin{aligned}
  & e^{r_as}(1+ (e^{-k_as}-1)Q(\gammabar)) < 1 \\
  \Leftrightarrow & r_a < -\frac{1}{s}\log{(1+ (e^{-k_as}-1)Q(\gammabar))}, \\
\end{aligned}
\end{equation}
\noindent where $k_a$ represents the maximal number of bits that can be transmitted in a WirelessHART time slot, $r_a$ is the size of the payload generated at the beginning of each superframe by the application and $Q(\gammabar)$ is the probability of successful MAC frame transmission over the wireless link, given as a function of the BER  and the average SNR \cite{802_15_4_standard}. 
The result in Eq.~\eqref{eq:kernel_whart} is explicitly derived in \cite{pe-wasun16} and serves as basis for the following numerical evaluation. The end-to-end kernel is obtained when the single-hop kernel is substituted into Theorem~\ref{thm:1}.

\subsubsection{Numerical Results}
\label{sec:numerics_lifetime_extension}
In the following, we are mainly interested in the network lifetime extension, represented in \%, obtained when applying our bound-based lifetime maximization algorithm (Algorithm~\ref{fun:battery_alg} in Appendix~\ref{app:pseudo_codes}) in comparison to the QoS-agnostic scheme. 
The lifetime is obtained as the minimal time duration a node can be operated by its corresponding battery among all node lifetimes for a given multi-hop path. 
The investigation is done for different target delays (in terms of superframes). 
We set $k_a = 127$ byte and the payload size $r_a$ is 10 byte \footnote{Small packets are typical for process automation applications.}. 
The arrival rate at the source node is one payload per superframe. 

To parametrize the transceiver model, we turn to the low-power Atmel IEEE802.15.4-based transceiver AT86RF233 \cite{atmel}. 
This transceiver is either in idle, send or receive mode and we obtain the power consumption in these modes from the given data sheet \footnote{Note that transceivers can offer only discrete transmit power values. Moreover, the provided data sheet contains current consumption data only for three power thresholds. For this reason, we perform polynomial curve fitting in order to obtain higher resolution consumption data and get more abstract results, not necessarily matching the transceiver capabilities in total. However, we do stay in the offered transmit power span.}.
Other system parameters are summarized in Table~\ref{tab:system_par}.  
\begin{table}[h]
\centering
\caption{System Parameters}
\label{tab:system_par}
\begin{tabular}{ c  c }
  Name & Value \\ \hline
  Total distance & 60 m \\ 
  Payload size & 10 bytes \\
  Frame size & 127 bytes \\
	Delay violation probability $\varepsilon$ & $10^{-3}$ \\
	Maximal transmit power $\Ptxmax$ & 4 dBm \\
	Current consumption in idle mode & 0.2 $\mu\mathrm{A}$ \\
	Current consumption in Rx mode & 11.8 $\mathrm{mA}$ \\
	Time slot duration & 10 ms \\
	Time spent in Tx mode & 4.256 ms \\
	ACK duration & 0.8 ms 
\end{tabular}
\end{table}

For the evaluations, we again consider different multi-hop path compositions as in Table~\ref{tab:paths}. 
However, as the battery state is another important parameter regarding the performance of the lifetime extension algorithm, we consider in addition three settings of the battery states.
In the \emph{equal} case, the battery of each node is fully charged at the moment of algorithm execution. 
The \emph{proportional} case assumes a proportional battery full state distribution among the nodes regarding the path loss on the link, i.e., the link with the highest path-loss is assigned the most charged battery. 
We finally consider the \emph{inverse proportional} battery state allocation, where the link with the highest path loss is allocated the least charged battery. All presented results refer to a target delay violation probability of $10^{-3}$ and are compared to the QoS-agnostic power allocation scheme.

\begin{figure}
\centering
  \includegraphics[scale=0.6]{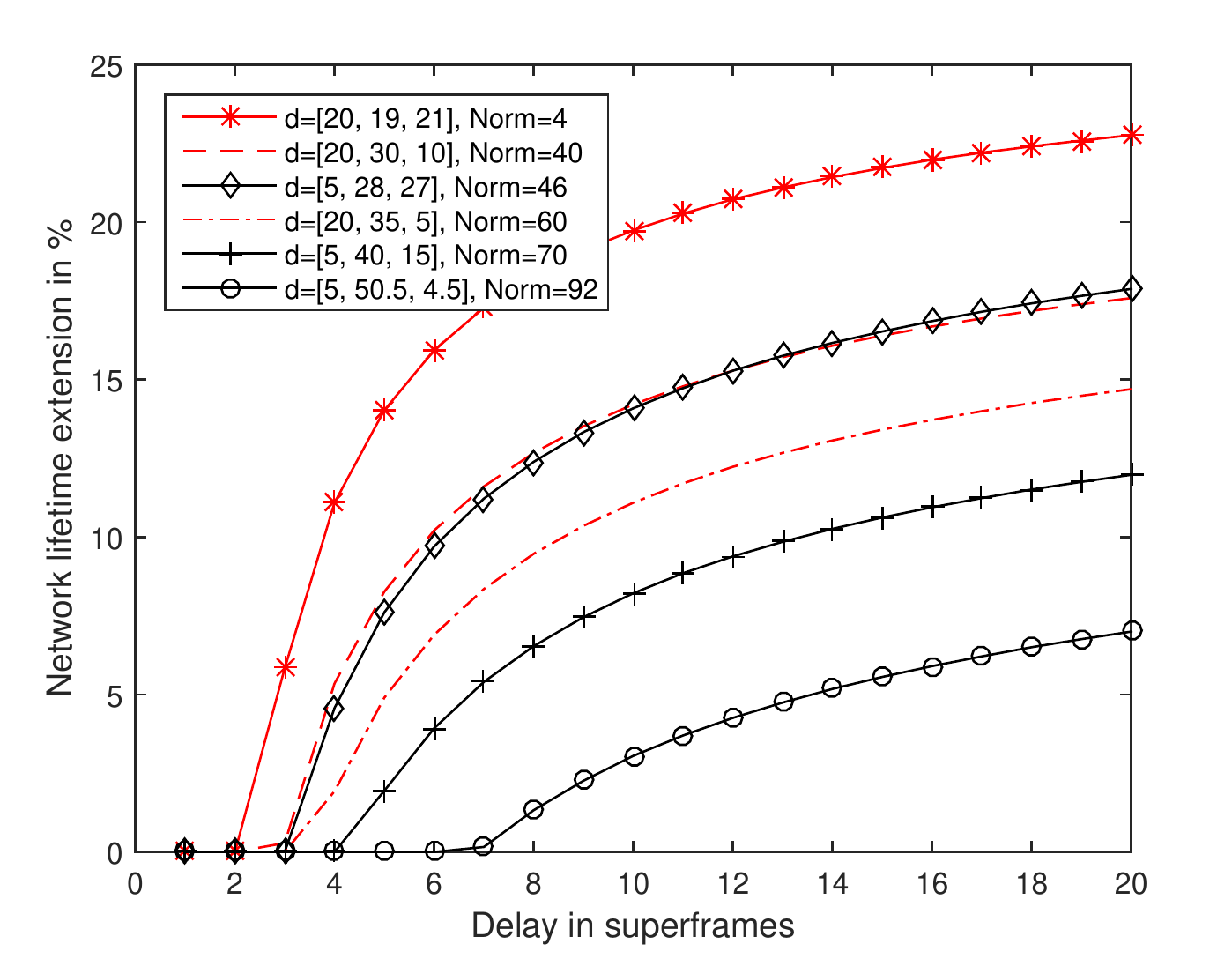}
\caption{Network lifetime extension for different target delays and path compositions. An equal battery state charge is assumed among all links.}
\label{fig:lifetime_equal_batteries}
\end{figure}
Fig.~\ref{fig:lifetime_equal_batteries} shows the gain of the network lifetime extension algorithm for different delay target $w$ considering the above discussed 3-hop path scenarios with various path norm $R$. 
In this plot, we consider the equal battery state charge among all nodes.  
For lower delays there is small gain in the network lifetime when using the algorithm in comparison to the QoS-agnostic scheme. 
As the target delay is increased, the gain in network lifetime increases. 
As we notice in Fig.~\ref{fig:lifetime_equal_batteries}, the more heterogeneous the links are, the less one can benefit from the proposed bound-based algorithm. 
The path with the least lifetime gain is the path with the biggest path norm. 

\begin{figure}
\centering
  \includegraphics[scale=0.6]{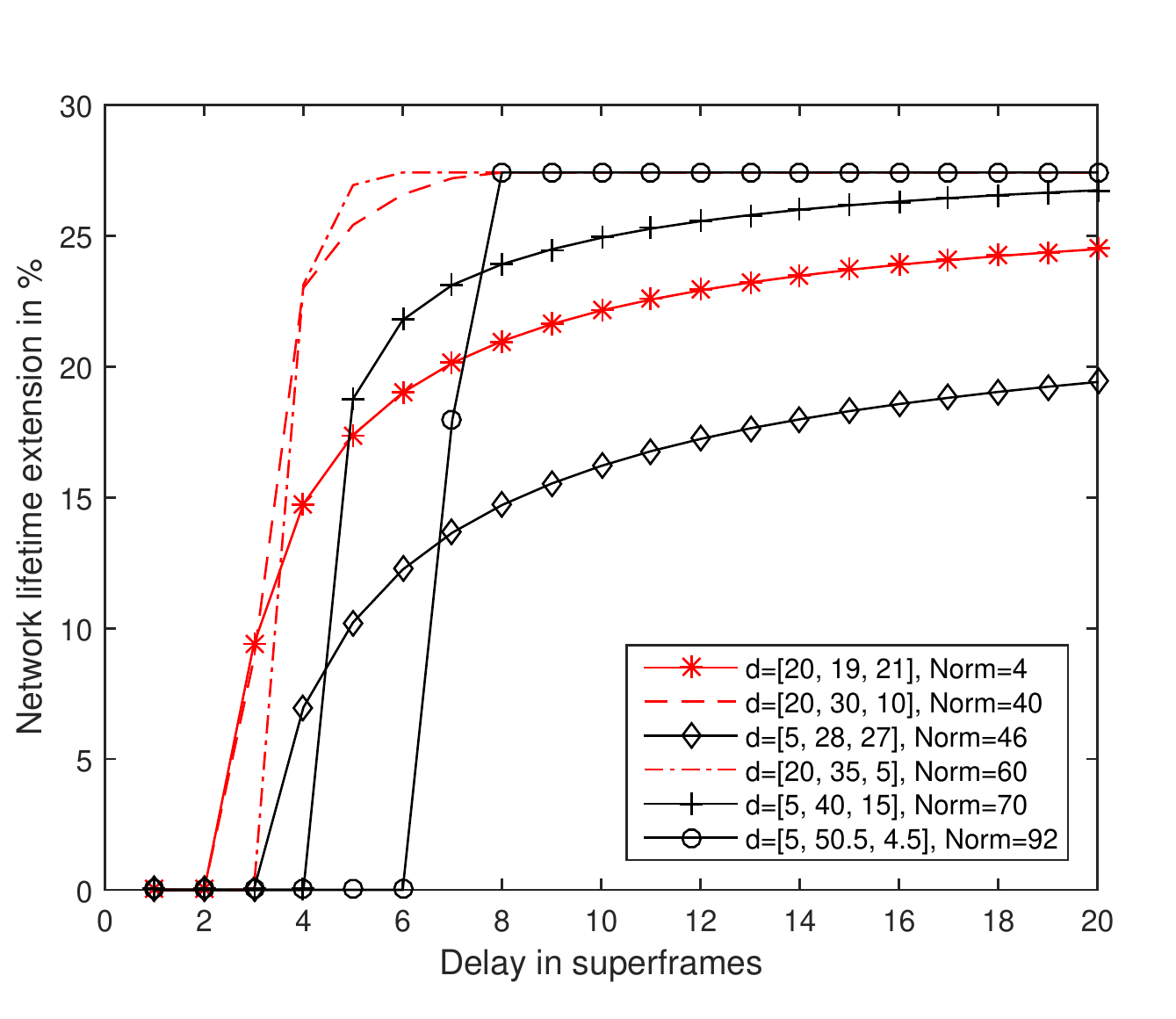}
\caption{Network lifetime extension for different target delays and path compositions. A proportional battery state charge is assumed among all links.}
\label{fig:lifetime_prop_batteries_v2}
\end{figure}
Fig.~\ref{fig:lifetime_prop_batteries_v2} illustrates the gain in network lifetime in case of proportional initial battery state distribution. 
We now notice a different trend: The paths with higher link heterogeneity benefit more from the lifetime maximization algorithm than paths with lower norm. This is due to the fact that the links with lower path loss (the better links) have a lower battery state and therefore are given advantage in the power-minimization decision, resulting finally with lower assigned transmit power and longer network lifetime. 
Note however, that the lowest lifetime extension is obtained for the path $[5, 28, 27]$, with the third link being the critical one (consuming the most energy, since it both sends and receives packets) and not the first one. 

\begin{figure}
\centering
  \includegraphics[scale=0.6]{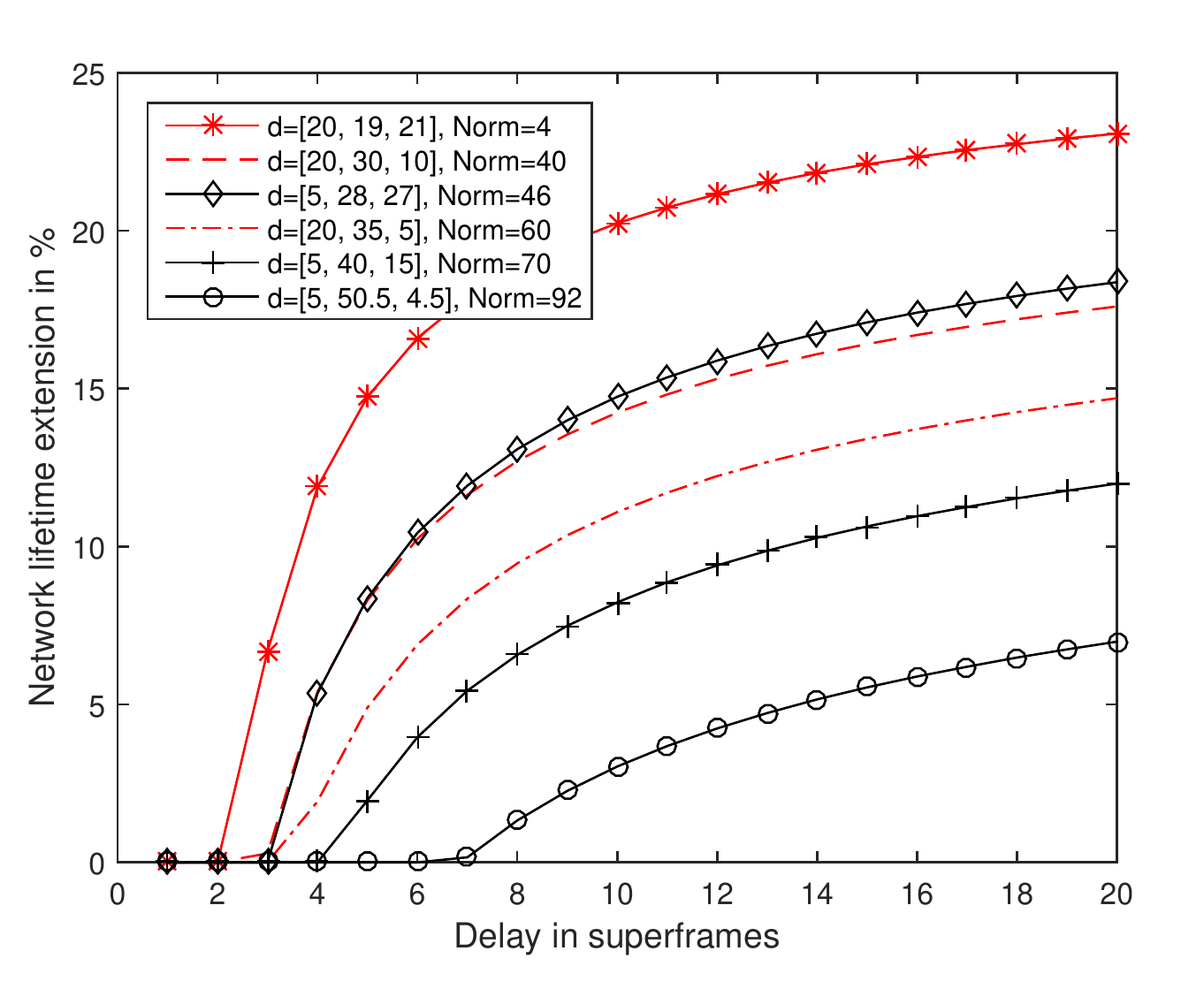}
\caption{Network lifetime extension for different target delays and path compositions. An inverse proportional battery state charge is assumed among all links.}
\label{fig:lifetime_unprop_batteries}
\end{figure}
Fig.~\ref{fig:lifetime_unprop_batteries} shows the network lifetime extension considering the same path scenarios for an inverse proportional initial battery state allocation, where the node in front of the weakest link is allocated the least battery capacity. 
We now notice the same trend as in the case of equal battery state allocation, namely that more homogeneous paths result with a larger lifetime extension. This is expected, since in both cases the algorithm prefers the links which consume more energy (i.e., the higher path loss links) when making the decision which link's transmit power to decrease.
\begin{figure}
\centering
  \includegraphics[scale=0.6]{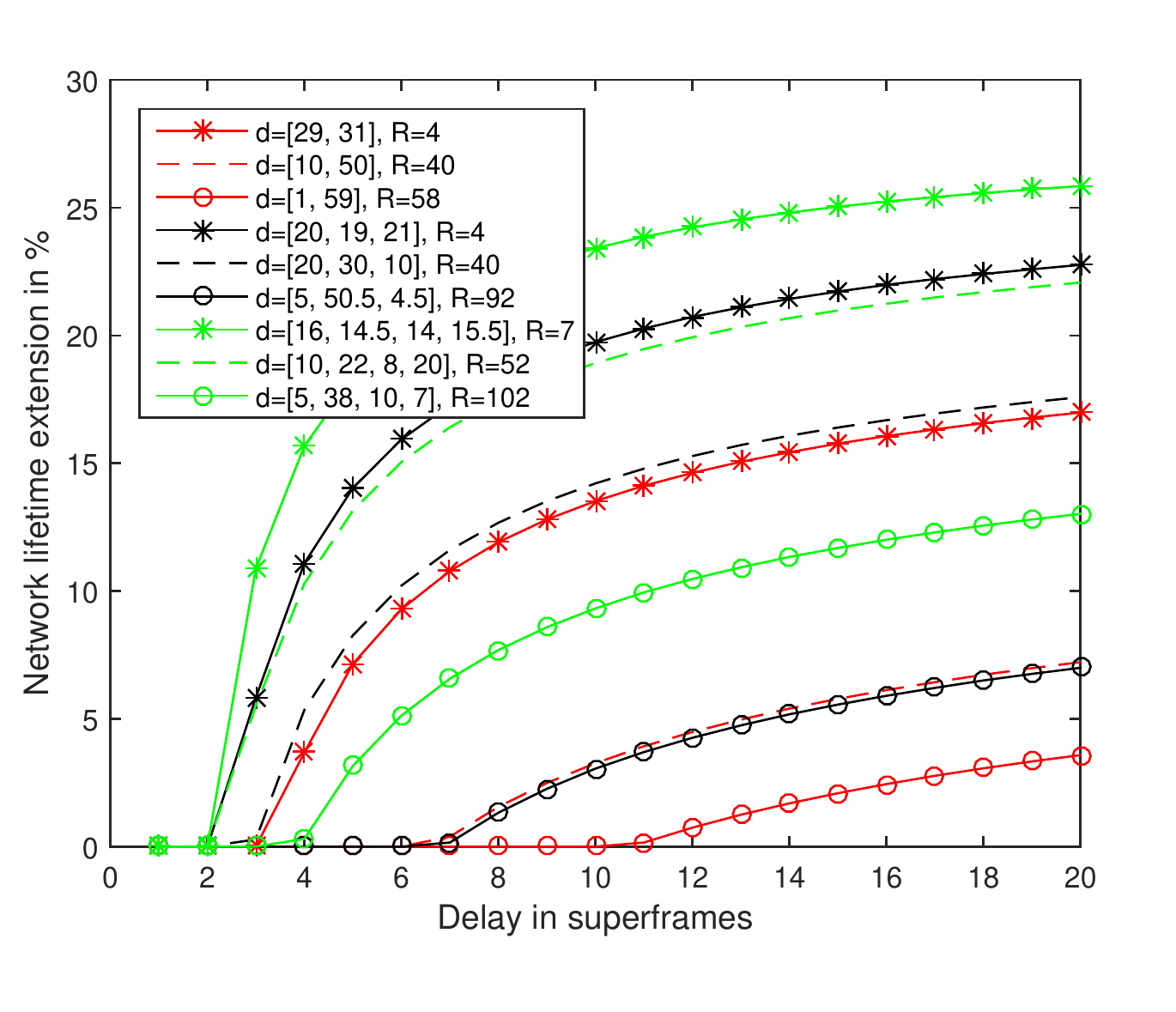}
\caption{Network lifetime extension for different target delays and path compositions for 2-, 3-, and 4-hop paths. An equal battery state charge is assumed among all links.}
\label{fig:netw_ext_more_hops}
\end{figure}

Finally, in Fig.~\ref{fig:netw_ext_more_hops} we show the network lifetime extension for 2-, 3- and 4-hop paths, each of them having low, moderate and high path norms. 
The initial battery allocation is equal among all nodes. 
We notice that the lifetime extension increases with the number of hops, however, still yielding the best one for paths with almost equal link path loss, similar to the observations in Fig.~\ref{fig:lifetime_equal_batteries} and Fig.~\ref{fig:lifetime_unprop_batteries}. 
Hence, as the path length grows, an optimal transmit power allocation under delay constraints becomes more necessary, even for rather low link heterogeneity. 

\subsubsection{Bound-Based vs. Simulation-Based Lifetime Maximization}

Similarly as in Section~\ref{subsec:vs_shannon}, we now provide a quantitative illustration of the gap between the bound-based and the simulation-based lifetime maximization. The simulation is started with a power allocation resulting from Algorithm~\ref{fun:battery_alg}, using the same system parameters as given in Table~\ref{tab:system_par}. In each iteration, the node whose transmit power is going to be reduced is chosen as the node with the least remaining battery lifetime. We are interested by how much the network lifetime can be additionally increased if using a simulation-based approach in comparison to the network lifetime provided by the bound-based method. We first take a look at the absolute values of the minimal battery duration of both schemes for lifetime maximization, shown in Fig.~\ref{fig:min_bat_duration}. The same link heterogeneity of 3-hop paths, as in Section~\ref{subsec:vs_shannon}, is considered. We notice that the minimal battery duration provided by the bound-based algorithm lies very closely to the one provided by simulations. Note that for 3-hop paths, the superframe duration is 30 ms.
\begin{figure}
\centering
  \includegraphics[scale=0.6]{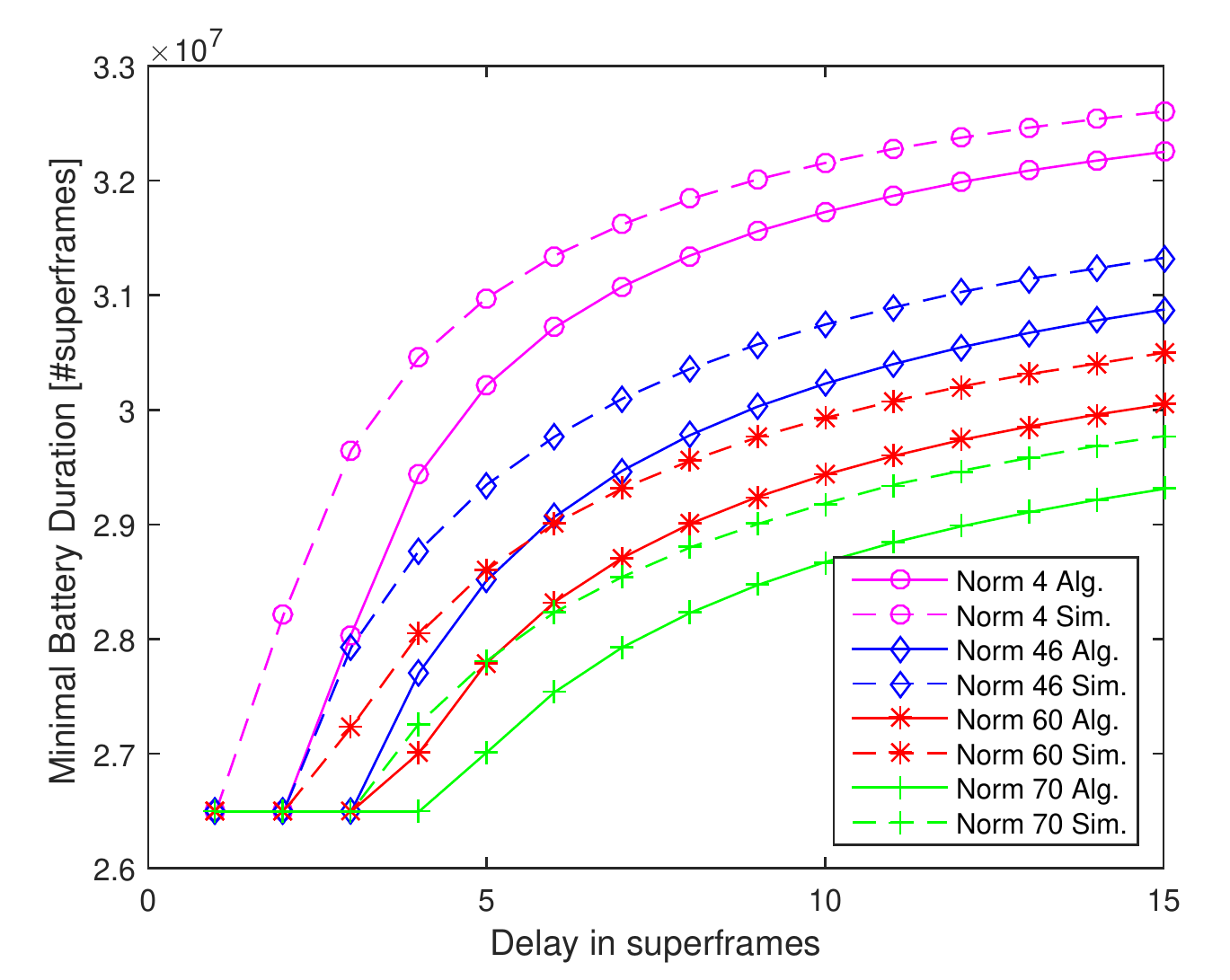}
\caption{A minimal battery duration in number of superframes vs. target delay $w$ resulting from both the bound-based and the simulation-based lifetime maximization. }
\label{fig:min_bat_duration}
\end{figure}

\begin{figure}
\centering
  \includegraphics[scale=0.6]{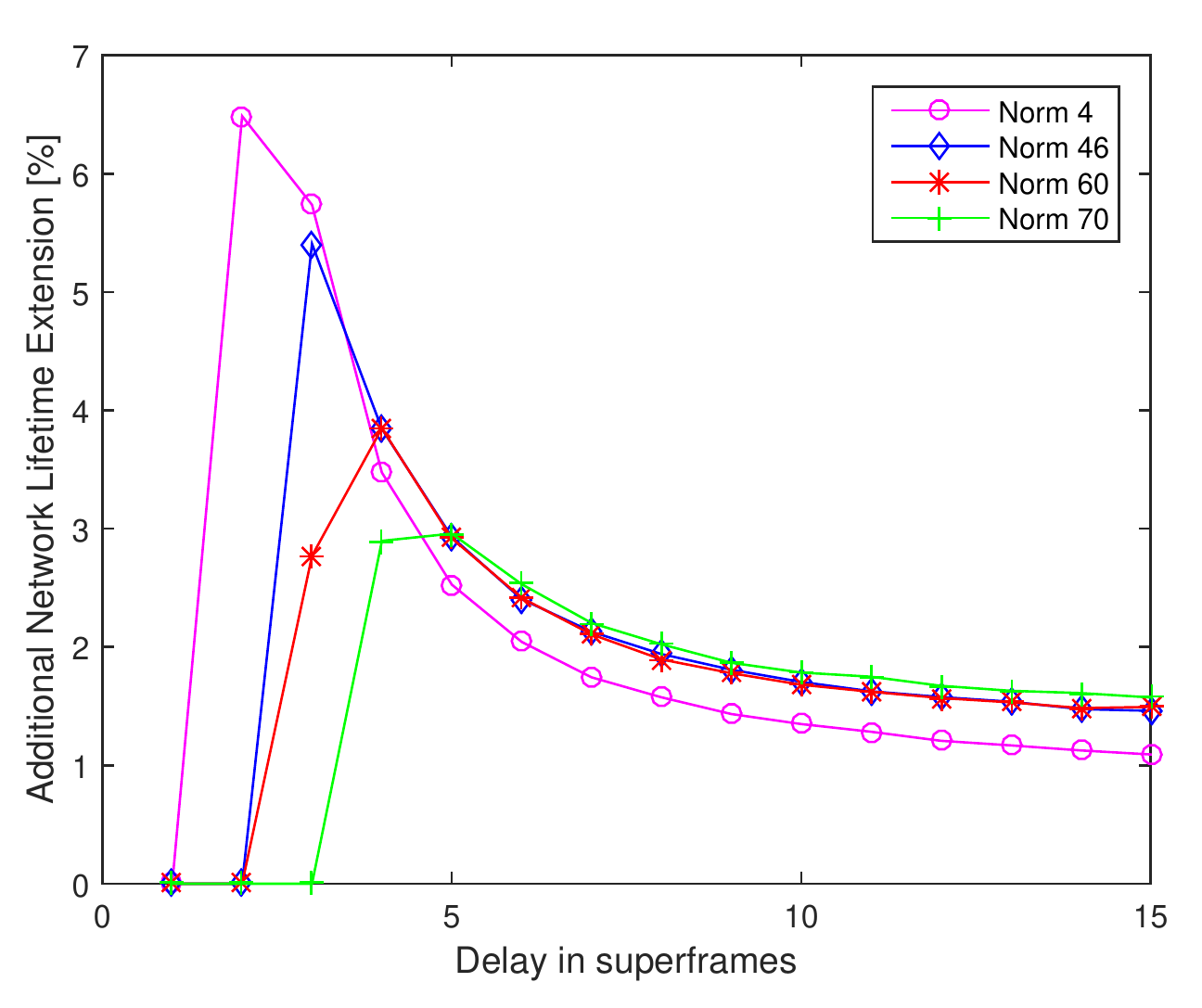}
\caption{Additional gain in the network lifetime extension for different target delays when using simulations in comparison to the delay bound-based approach.}
\label{fig:lifetime_gain}
\end{figure}
Fig.~\ref{fig:lifetime_gain} shows the additional gain in network lifetime if simulation-based instead of bound-based power optimization is applied. In comparison to the total transmit power minimization, in the WirelessHART case study we observe much smaller additional gain in the network lifetime (not bigger than 7 \%) when conducting simulations. This verifies the usefulness of the proposed algorithm for lifetime maximization, which at the same time provides performance guarantees of wireless networks employed in practice. Since the power consumed in transmit mode is only a part of the whole battery energy consumption, the optimal power allocation for a WirelessHART multi-hop network can be very closely estimated by a solution based on the end-to-end delay bound. This holds especially for paths with higher degree of link heterogeneity, as shown in the figure.

\section{Conclusion}
\label{sec:conclusion}

This paper presents a novel  bound-based algorithm for transmit power allocation with two variants: (i) total power minimization, and, (ii)  network lifetime maximization, in wireless industrial networks. 
These algorithms are based on theoretical results that we developed for the performance of heterogeneous (where both channel gain and battery full state heterogeneity are considered) multi-hop wireless networks using a stochastic network calculus approach. We provide numerical evaluation of the performance of these algorithms and compare them to simulation-based power optimization. 

In the work, we consider two wireless channel capacity models: (1) an `ideal' Shannon-capacity-based model, and, (2) a `realistic' WirelessHART-based model, the latter of which is widely applied in real industrial systems used for process automation. The numerical analysis of our proposed bound-based power optimization shows a power gain of up to 95\% for almost homogeneous multi-hop paths and more than 70\% even for paths with highly heterogeneous links using the ideal capacity model (1) above. 
It also shows that the power gain is high even for stricter delay violation probability requirements. 
A second group of numerical results, pertaining to the capacity model in (2) above, shows network lifetime extension of up to 25\% for various link heterogeneity and types of initial battery allocation strategies. We conclude that for nodes having equal battery capacities, the paths with higher link homogeneity benefit more of the proposed battery-lifetime maximization approach. However, in case of a different battery allocation strategy, 
higher lifetime extension is obtained for multi-hop paths with higher degree of link heterogeneity. 

Finally,  to evaluate the accuracy of the proposed  bound-based optimization, we provide a quantitative evaluation of the gap between the optimal power allocation resulting from the  bound-based approach  and the one obtained via simulation of the application in question.
Although the gap between the bound-based and the simulation-based optimum is non-negligible for smaller target delays in case of total transmit power minimization, there is an insignificant difference between the two approaches in case of network lifetime maximization. 
Hence, we strongly believe that the presented algorithms for bound-based optimal transmit power allocation based on stochastic network calculus principles, being first of their kind, offer a useful analytical framework for the design of wireless multi-hop networks under statistical delay constraints. 
Furthermore, the recursive nature of the devised end-to-end delay bound together with the power minimization algorithms is a solid basis for the development of  energy-efficient,  delay-aware routing algorithms for future wireless multi-hop heterogeneous networks.
 
An important extension of this work relates perhaps to the analysis of networks being exposed to multiple flows, as in meshed networks. 
Our presented approach is applicable even for such networks as long as flows utilize orthogonal resources which can be reserved per node, as is realized in WirelessHART or through approaches like Time-Sensitive Networking.
Otherwise, if the routing path is not determined, and also no resources are reserved for the flow under consideration, we provide a worst-case bound on a single flow being exposed to cross-traffic. 
This result nevertheless has to be extended towards multi-path routing, which constitutes a next step in the evolution of the framework presented here.

%
%
\section*{APPENDIX}
  \begin{appendices}
  \section{Derivation of the End-to-End Delay Bound}
	\label{app:theorem_proof}
  \begin{proof}
We start by considering the bound on the Mellin transform of the service curve of path $\mathbb{L}$ as given by Eq.~\eqref{eq:mellin_transform_path} with $i_0=\tau$ and $i_N=t$. Without loss of generality, let $m=N-1$. 
So, we obtain:
\begin{align*}
   & \M_{\S^{\mathbb{L}}}(s, t-\tau) \leq \sum_{i_1,\dots ,i_{N-1}}^t \prod_{j=1}^{N} \M_{g(\gamma_j)}^{i_j - i_{j-1}} \\ 
  & =\!\!\!\!\! \sum_{i_1, \dots ,i_{N-2}}^t \!\!\!\!\! \M_{g(\gamma_1)}^{i_1-\tau} 
      \M_{g(\gamma_2)}^{i_2-i_1} \dots \frac{ \M_{g(\gamma_N)}^{t}}{ \M_{g(\gamma_{N\!-\!1})}^{i_{N\!-\!2}}} \! \sum_{i_{N\!-\!1}=i_{N\!-\!2}}^{t} \!\!\! \left(\frac{ \M_{g(\gamma_{N\!-\!1})}}{ \M_{g(\gamma_N)}}\right)^{i_{N\!-\! 1}}  \\
  & = \sum_{i_1, \dots ,i_{N-2}}^t  \M_{g(\gamma_1)}^{i_1-\tau}  \cdot \M_{g(\gamma_2)}^{i_2-i_1}\dots  \frac{  \M_{g(\gamma_N)}^{t}}{  \M_{g(\gamma_{N-1})}^{i_{N-2}}} \\
  &  \cdot \left(\frac{\left(\frac{  \M_{g(\gamma_{N-1})} }{  \M_{g(\gamma_N)}}\right)^{i_{N-2}}-\left(\frac{  \M_{g(\gamma_{N-1})}}{  \M_{g(\gamma_N)}}\right)^{t + 1}}{1-\frac{  \M_{g(\gamma_{N-1})}}{  \M_{g(\gamma_N)}}}\right)   \\
  & = \frac{  \M_{g(\gamma_N)} }{ \M_{g(\gamma_N)} - \M_{g(\gamma_{N-1})}}  \sum_{i_1, \dots ,i_{N-2}}^t  \M_{g(\gamma_1)}^{i_1-\tau}\dots \frac{ \M_{g(\gamma_N)}^{t}}{ \M_{g(\gamma_{N-1})}^{i_{N-2}}}\\
  & \cdot \left(\frac{ \M_{g(\gamma_{N-1})}}{ \M_{g(\gamma_N)}}\right)^{i_{N-2}}  - \frac{ \M_{g(\gamma_N)}}{ \M_{g(\gamma_{N})} - \M_{g(\gamma_{N-1})}}  \\
  & \sum_{i_1, \dots ,i_{N-2}}^t 
     \M_{g(\gamma_1)}^{i_1-\tau}\dots \frac{{\M_{g(\gamma_N)}}^{t}}{ \M_{g(\gamma_{N-1})}^{i_{N-2}}}\left(\frac{ \M_{g(\gamma_{N-1})}}{ \M_{g(\gamma_N)}}\right)^{t} \frac{ \M_{g(\gamma_{N-1})}}{ \M_{g(\gamma_N)}}   \\
  & = \frac{ \M_{g(\gamma_N)}}{ \M_{g(\gamma_N)} \!- \! \M_{g(\gamma_{N-1})}} \!\!  \sum_{i_1, \dots ,i_{N-2}}^t \!\!\! \!\! \M_{g(\gamma_1)}^{i_1-\tau} \dots \M_{g(\gamma_{N-2})}^{i_{N-3}-i_{N-2}} \M_{g(\gamma_N)}^{t - i_{N-2}} \\
  & -\! \frac{ \M_{g(\gamma_{N-1})}}{ \M_{g(\gamma_N)} \!- \! \M_{g(\gamma_{N-1})}} \!\!  \sum_{i_1, ... ,i_{N-2}}^t \!\!\!\!\! \M_{g(\gamma_1)}^{i_1-\tau} \dots \M_{g(\gamma_{N-2})}^{i_{N-3}-i_{N-2}} \M_{g(\gamma_{N-1})}^{t -i_{N-2}} \\ 
  & = \frac{ \M_{g(\gamma_{N})}} { \M_{g(\gamma_N)} - \M_{g(\gamma_{N-1}) }} \M_{\S^{\mathbb{L} \setminus \{N-1\}}}(t-\tau) \\
  & + \frac{ \M_{g(\gamma_{N-1}) } }{ \M_{g(\gamma_{N-1})} - \M_{g(\gamma_{N})} } \M_{\S^{\mathbb{L}\setminus \{N\}}}(t-\tau) \, ,
 \end{align*}
where we have omitted for readability that all Mellin transforms are functions of $s$ above.
Thus, we have shown that an upper bound of the Mellin transform of path $\mathbb{L}$ can be obtained recursively from the Mellin transform of the service process of paths $\mathbb{L} \setminus \{N-1\}$ and  $\mathbb{L} \setminus \{N\}$.  
 
As the kernel is a function of the Mellin transforms of the SNR domain arrival and service process, i.e.,
\begin{equation} 
\begin{aligned}
\Kd^{\mathbb{L}} \left(s, t+w, t\right) \!=\! \sum_{i=0}^t \M_{\A}\left(1+s,i,t\right)    \mathcal{M}_{\mathcal{S}^{\mathbb{L}}}(1-s,i,t+w) \, , \nonumber
 \label{eq:proof_1}
 \end{aligned}
\end{equation}
 it follows directly that the steady state kernel $\Kd^{\mathbb{L}} \left(s,-w\right)$ is a recursive function of the kernels $\Kd^{\mathbb{L} \setminus \{N-1\}} \left(s,-w \right)$ and $\Kd^{\mathbb{L} \setminus \{N\}} \left(s,-w \right)$ as claimed in the theorem.
\end{proof}

\section{Pseudo Codes of the Bound-Based Algorithms}
\label{app:pseudo_codes}

In this appendix we present the pseudo-code for the bound-based power-minimization algorithm (Algorithm \ref{fun:power_min_short}) and the bound-based network lifetime extension algorithm (Algorithm \ref{fun:battery_alg}). The function \emph{search\_s} (Algorithm \ref{fun:search_s}) is an auxiliary function called in both algorithms. 

\begin{algorithm}[ht]
\caption{Search $s^{*} \in (0,b)$ for which $\K^{\L}(s^{*},-w)$ is minimal}
\label{fun:search_s}
\begin{algorithmic}[1]
\Function{SEARCH\_S }{$0,b,\gammabar,\Deltamin,r_a,w$}
\Ensure Find $s^{*}$

\State Compute ${s_\mathrm{l},s_\mathrm{m},s_\mathrm{r}} \in (0,b)$
\State $s_{\mathrm{start}}=0, s_{\mathrm{end}}=b$; use for simplicity $\K^{\L}(s_i,-w) \triangleq \K^{\L}(s_i)$

\If {$s_\mathrm{m}-s_\mathrm{l} > \Deltamin$}
  \State Find out in which interval lies $s^{*}$ 
  \Statex ***Case 1: $s^{*} \in (s_{\mathrm{start}},s_\mathrm{m})$
  \If {$\K^{\L}(s_{\mathrm{end}})>\K^{\L}(s_\mathrm{r})>\K^{\L}(s_\mathrm{m})>\K^{\L}(s_\mathrm{l})$}
     \State $s^{*}= \text{search\_s}(s_{\mathrm{start}},s_\mathrm{m},\gammabar,\Deltamin,\kts,w)$
  \Statex ***Case 2: $s^{*} \in (s_{\mathrm{m}},s_\mathrm{end})$
  \ElsIf {$\K^{\L}(s_{\mathrm{start}})>\K^{\L}(s_\mathrm{l})>\K^{\L}(s_\mathrm{m})>\K^{\L}(s_\mathrm{r})$}
     \State $s^{*}= \text{search\_s}(s_\mathrm{m},s_{\mathrm{end}},\gammabar,\Deltamin,\kts,w)$
  \Statex ***Case 3: $s^{*} \in (s_{\mathrm{l}},s_\mathrm{r})$
  \ElsIf {$\K^{\L}(s_{\mathrm{end}})>\K^{\L}(s_\mathrm{r})>\K^{\L}(s_\mathrm{m})$
  \State AND $\K^{\L}(s_\mathrm{start})>\K^{\L}(s_\mathrm{l})>\K^{\L}(s_{\mathrm{m}})$}
     \State $s^{*}= \text{search\_s}(s_\mathrm{l},s_\mathrm{r},\gammabar,\Deltamin,\kts,w)$
    \EndIf
\Else 
  \State $s^{*}=s_\mathrm{m}$
  \State \Return
\EndIf
\EndFunction
\end{algorithmic}
\end{algorithm}


\begin{algorithm}[ht]
\caption{Transmit Power-Minimization Algorithm}
\label{fun:power_min_short}
\begin{algorithmic}[1]
\Require $|\vec{{\bar{h}}}^2|, \deltaPtx, \Ptxmax, \deltaPtxmin, r_a, w, \varepsilon, \deltaEpsilon$
\Ensure $\min \sum_{n=1}^N \Ptx_n$, s.t. $\K^{\L}(s,-w) \leq \varepsilon$

\State $\vPtx=\Ptxmax$; $\gammabar=\min \{{\vec{\gammabar}}_{\mathrm{max}}\}=\min[f(\Ptxmax, |\vec{{\bar{h}}}^2|]$; $\K^{\L}(s,-w) \triangleq \K^{\L}(s,\vPtx)$

\State Find $s'$, $\forall s \in (0,s'),$ s.t. Eq.~\eqref{eq:stability_cond} holds for $\gammabar$
\State Compute $\vec{\mathrm{P}}_{\text{txmin}}$, s.t. channel capacity $ \geq r_a$
\State Compute current delay bound $\hat{\varepsilon}=\K^{\L}(s, \vPtx)$
%
\If {$(\hat{\varepsilon} > \varepsilon)$} \Return fail
\Else
   \While {$(\hat{\varepsilon} \notin (\varepsilon-\deltaEpsilon,\varepsilon))$}
       \While {$(\hat{\varepsilon} > \varepsilon)$ AND ($\deltaPtx > \deltaPtxmin$)}
          \State Choose smaller $\deltaPtx: \deltaPtx=\nicefrac{\deltaPtx}{2}$
			\EndWhile
          \State $\vPtxprim=\vPtx$; ${\vPtx}_{n}$ is transmit power vector where $p_n={\Ptx}_n'-\deltaPtx$
          \label{line:gradient}\State Find the smallest gradient: $n=\underset{N}{\operatorname{argmin}} \triangledown {\K}_n= \lvert \frac{\hat{\varepsilon}-\K^{\L}(s, \vPtx_n)}{\deltaPtx}\rvert$
          \label{line:reduce_Ptx}\State $\Ptxn'=\Ptxn-\deltaPtx$, assure $\Ptxn' \geq {\Ptxmin}_n$; $\vgammabar=f(\vPtxprim,|\vec{{\bar{h}}}^2|)$ 
          \State $s^{*}=\text{search\_s}(s_{\mathrm{start}},b,\vgammabar,\Deltamin,r_a,w)$; compute $\hat{\varepsilon}=\K^{\L}(s^{*}, \vPtxprim)$
       
       \State $\vPtx=\vPtxprim$
   \EndWhile
	 \State \Return $\vPtx$
\EndIf

\end{algorithmic}
\end{algorithm}

\begin{algorithm}[ht]
\caption{Network Lifetime Maximization Algorithm}
\label{fun:battery_alg}
\begin{algorithmic}[1]
\Require $|\vec{{\bar{h}}}^2|, \deltaPtx, \Ptxmax, \deltaPtxmin, k_a, r_a, w, \varepsilon, \deltaEpsilon, {\vec{\mathrm{B}}}$
\Ensure $\max \underset{m}{\min} \{\theta_m\}, \theta_m \in \vec{\mathrm{\theta}}, \vec{\mathrm{\theta}}=\frac{\vec{\mathrm{B}}}{{T\vPtx}}$, $\K^{\L}(s, -w) \leq \varepsilon$

\State $\vPtx=\Ptxmax$; $\gammabar=\min \{{\vec{\gammabar}}_{\mathrm{max}}\}$
\State Find $s'$, s.t. $\forall s \in (0,s')$ stability condition Eq.~\eqref{eq:stability_cond} holds for $\gammabar$
\State Set ${\mathrm{P}}_{\text{txmin}}$ to transceiver capabilities; $\hat{\varepsilon}=\K^{\L}(s,-w)$
\State Same approach as in Algorithm~\ref{fun:power_min_short}, only replace line~\ref{line:gradient} and~\ref{line:reduce_Ptx} with
\State \label{line:ptx_min_criteria}$M=\forall m: \{\underset {m}{\operatorname{argmin}} \{\theta_m\}\}$, $\forall m \in M: \Ptx_m'={\Ptx}_m-\frac{\deltaPtx}{\lvert M \rvert}$, assure ${\Ptx}_m' \geq {\Ptxmin}$ 

\end{algorithmic}
\end{algorithm}



\FloatBarrier
\section{Proof of the Delay Bound Convexity}
\label{app:convexity}
\begin{proof}
As shown in Eq.~\eqref{eq:delay_kernel}, the kernel has the following form:
\begin{equation}
\label{eq:kernel_ab}
\begin{aligned}
\Kd\left(s,-w\right) & = \frac{\M_{g\left(\gamma\right)}^w\left(1 - s\right)}{1 - \M_{\alpha}\left(1 + s \right) \M_{g\left(\gamma\right)}\left(1 - s \right)} \\
& = \frac{(\M_{\S}(1-s,\tau,t))^w}{1 - \M_{\A}(1+s,\tau,t) \M_{\S}(1-s,\tau,t)},
\end{aligned}
\end{equation}

\noindent where $\M_{\A}(1+s,\tau,t)$ is the Mellin transform of the arrival process in $1+s$ and $\M_{\S}(1-s,\tau,t)$ is the Mellin transform of the service in $1-s$.
From its definition it follows that, \mbox{$\M_{\A}(1+s,\tau,t)$} is strictly increasing and $\M_{\S}(1-s,\tau,t)$ is strictly decreasing in $s$. Note that we limit in the following the scope of the parameter $s$ to the stability region, i.e., $s\in(0,b)$. 
In order to prove that Eq.~\eqref{eq:kernel_ab} is convex, we will first show that \mbox{$\M_{\A}(1+s,\tau,t) \cdot \M_{\S}(1-s,\tau,t)$} is convex. We will then use this fact to proof the convexity of the whole kernel. 

Let us consider the following functions:

\begin{equation}
\begin{aligned}
& \M_{\A}(1+s,\tau,t) = \left(\EE[e^{s\alpha}]\right)^{t-\tau}\\
& \M_{\S}(1-s,\tau,t) = \left(\EE[e^{-s\beta}]\right)^{t-\tau},
\end{aligned}
\end{equation}

\noindent where $\alpha$ and $\beta$ are the random arrival to the link and service offered by the link. Note here that, $\left(\EE[e^{-s\beta}]\right)^{t-\tau}\leq 1$ for $s\in(0,b)$, i.e., the Mellin transform of the service reaches its maximum for $s=0$. In the following we will prove that $\left(\EE[e^{s\alpha}]\right)^{t-\tau} \cdot \left(\EE[e^{-s\beta}]\right)^{t-\tau}$ is convex in $s$ for the range of $s$ for which the stability condition holds.

We start by rewriting the product as
\begin{equation}
\label{eq:stab_cond2}
\left(\mathbb{E}[e^{s\alpha}]\right)^{t-\tau} \cdot \left(\EE[e^{-s\beta}]\right)^{t-\tau}=\left(\EE[e^{-sx}]\right)^{t-\tau}<1,
\end{equation}

\noindent where we substitute $x=\beta-\alpha$ and use the independence of $\alpha$ and $\beta$. In order for $\left(\EE[e^{-sx}]\right)^{t-\tau}$ to be convex, it has to hold for any $0\leq \delta \leq 1$:
\begin{equation}
\label{eq:convex_condition}
\left(\EE[e^{-(\delta s_1+(1-\delta)s_2)x}]\right)^{t-\tau} 
 \leq \delta\left(\EE[e^{-s_1x}]\right)^{t-\tau} + (1-\delta)\left(\EE[e^{-s_2x}]\right)^{t-\tau}.
\end{equation}

Let us apply H\"older's inequality \cite{hoelder} to the left-hand side of Eq.~\eqref{eq:convex_condition}. H\"older's inequality states that
\begin{equation}
\EE[|X\cdot Y|]\leq \left(\EE|X|^p\right)^{\nicefrac{1}{p}}\cdot\left(\EE|Y|^q\right)^{\nicefrac{1}{q}}
\end{equation}
for $\forall p,q$ for which $\frac{1}{p}+\frac{1}{q}=1$. Hence, for $\frac{1}{p}=\delta$ and $\frac{1}{q}=1-\delta$, we have:
\begin{equation}
\label{eq:h1}
\begin{aligned}
&\left(\EE[e^{-\left(\delta s_1+(1-\delta)s_2\right)x}]\right)^{t-\tau}=\left(\EE[|e^{-\delta s_1x}\cdot e^{-(1-\delta)s_2x}|]\right)^{t-\tau}\\
& \leq \left(\EE[|e^{-\delta s_1x}|^{\nicefrac{1}{\delta}}]\right)^{(t-\tau)\delta} \cdot \left(\EE[|e^{-(1-\delta) s_2x}|^{\nicefrac{1}{1-\delta}}]\right)^{(t-\tau)(1-\delta)} \\
& = \left(\EE[e^{-s_1x}]\right)^{(t-\tau)\delta} \cdot \left(\EE[e^{-s_2x}]\right)^{(t-\tau)(1-\delta)} \\
& \leq \delta \left(\EE[e^{-s_1x}]\right)^{t-\tau} + (1-\delta) \left(\EE[e^{-s_2x}]\right)^{t-\tau}.
\end{aligned}
\end{equation}
\noindent In the last line we use the fact that
\begin{equation}
\label{eq:temp_inequality}
u^{\delta} v^{1-\delta} \leq \delta u+(1-\delta)v, \forall \delta \in [0,1],
\end{equation}
which holds because of the following: Let us define $f(\delta)=u^{\delta} v^{1-\delta} - \delta u-(1-\delta)v$. The second derivative of $f(\delta)$ results to:
\begin{equation}
f''(\delta)=u^{\delta} v^{1-\delta} (\log{(u)} - \log{(v)})^{2} \geq 0,
\end{equation}
\noindent $\forall \delta \in [0,1]$ and $u,v < 1$. Since $f(0)=f(1)=0$ and $f''(\delta) \geq 0$, it follows that $f(\delta)$ reaches a local minimum for $\delta \in [0,1]$. Hence, $f(\delta) \leq 0, \forall \delta \in [0,1]$ and therefore Eq.~\eqref{eq:temp_inequality} holds.
Therefore, we show that $\M_{\A}(1+s,\tau,t) \M_{\S}(1-s,\tau,t)$ is convex.
Having shown this, it follows that the function $1-\M_{\A}(1+s,\tau,t) \M_{\S}(1-s,\tau,t)$ is concave and positive, since \mbox{$\M_{\A}(1+s,\tau,t) \M_{\S}(1-s,\tau,t) \leq 1$} due to the stability condition. Hence, the reciprocal is convex, i.e., $\frac{1}{1-\M_{\A}(1+s,\tau,t) \M_{\S}(1-s,\tau,t)}$ is convex \cite{Book:boyd}. 

We now turn to the second part of the proof, namely, we consider the entire kernel. In order to show that the kernel given with Eq.~\eqref{eq:kernel_ab} is convex, it has to hold:

\begin{equation}
\label{eq:whole_convexity}
\begin{aligned}
 & \frac{\left(\EE[e^{-(\delta s_1+(1-\delta)s_2)\beta}]\right)^{(t-\tau)w}}{1-\left(\EE[e^{(\delta s_1+(1-\delta)s_2)\alpha}]\cdot\EE[e^{-(\delta s_1+(1-\delta)s_2)\beta}]\right)^{t-\tau}} \\
& \leq \delta \cdot \frac{\left(\EE[e^{-s_1\beta}]\right)^{(t-\tau)w}}{1-\left(\EE[e^{s_1\alpha}]\cdot\EE[e^{-s_1\beta}]\right)^{t-\tau}} + (1-\delta) \cdot  \frac{\left(\EE[e^{-s_2\beta}]\right)^{(t-\tau)w}}{1-\left(\EE[e^{s_2\alpha}]\cdot\EE[e^{-s_2\beta}]\right)^{t-\tau}},
\end{aligned}
\end{equation}
$\forall s_1,s_2 \in (0,b)$ and $0 \leq \delta \leq 1$. Since $\frac{1}{1-\M_{\A}(1+s,\tau,t) \cdot \M_{\S}(1-s,\tau,t)}$ is convex, we know:
\begin{equation}
\label{eq:ma_ms_convex}
\begin{aligned}
 & \frac{1}{1-\left(\EE[e^{(\delta s_1+(1-\delta)s_2)\alpha}]\cdot\EE[e^{-(\delta s_1+(1-\delta)s_2)\beta}]\right)^{t-\tau}} \\
& \leq \delta \cdot \frac{1}{1-\left(\EE[e^{s_1\alpha}]\cdot\EE[e^{-s_1\beta}]\right)^{t-\tau}} + (1-\delta) \cdot \frac{1}{1-\left(\EE[e^{s_2\alpha}]\cdot\EE[e^{-s_2\beta}]\right)^{t-\tau}}.
\end{aligned}
\end{equation}
Multiplying the left- and right-hand side of Eq.~\eqref{eq:ma_ms_convex} by $\left(\EE[e^{-(\delta s_1+(1-\delta)s_2)\beta}]\right)^{(t-\tau)w}$, we obtain:
\begin{equation}
\label{eq:eq2}
\begin{aligned}
 & \frac{\left(\EE[e^{-(\delta s_1+(1-\delta)s_2)\beta}]\right)^{(t-\tau)w}}{1-\left(\EE[e^{(\delta s_1+(1-\delta)s_2)\alpha}]\cdot\EE[e^{-(\delta s_1+(1-\delta)s_2)\beta}]\right)^{t-\tau}} \\
& \leq \delta \cdot \frac{\left(\EE[e^{-(\delta s_1+(1-\delta)s_2)\beta}]\right)^{(t-\tau)w}}{1-\left(\EE[e^{s_1\alpha}]\cdot\EE[e^{-s_1\beta}]\right)^{t-\tau}} + (1-\delta) \cdot \frac{\left(\EE[e^{-(\delta s_1+(1-\delta)s_2)\beta}]\right)^{(t-\tau)w}}{1-\left(\EE[e^{s_2\alpha}]\cdot\EE[e^{-s_2\beta}]\right)^{t-\tau}}.
\end{aligned}
\end{equation}
Now, using again H\"older's inequality for $\frac{1}{p}=\delta$ and $\frac{1}{q}=1-\delta$, we obtain:
\begin{equation}
\begin{aligned}
&\left(\EE[e^{-(\delta s_1+(1-\delta)s_2)\beta}]\right)^{(t-\tau)w} = \left(\EE[|e^{-\delta s_1 \beta}| \cdot |e^{-(1-\delta)s_2 \beta}|]\right)^{(t-\tau)w} \\
& \leq \left(\EE[|e^{-\delta s_1 \beta}|^{\nicefrac{1}{\delta}}]\right)^{(t-\tau)w\delta} \cdot \left(\EE[|e^{-(1-\delta) s_2 \beta}|^{\nicefrac{1}{1-\delta}}]\right)^{(t-\tau)w(1-\delta)} \\
& = \left(\EE[|e^{-s_1 \beta}|]\right)^{(t-\tau)w\delta} \cdot \left(\EE[|e^{-s_2 \beta}|]\right)^{(t-\tau)w(1-\delta)} \\
& \leq \left(\EE[|e^{-s_1 \beta}|]\right)^{(t-\tau)w} \cdot \left(\EE[|e^{-s_2 \beta}|]\right)^{(t-\tau)w},
\end{aligned}
\end{equation}
since $\left(\EE[|e^{-s \beta}|]\right)^{(t-\tau)w} > 0$.
Hence, the following inequality holds for the right-hand side of Eq.~\eqref{eq:eq2}:
\begin{equation}
\label{eq:eq3}
\begin{aligned}
& \delta \cdot \frac{\left(\EE[e^{-(\delta s_1+(1-\delta)s_2)\beta}]\right)^{(t-\tau)w}}{1-\left(\EE[e^{s_1\alpha}]\cdot\EE[e^{-s_1\beta}]\right)^{t-\tau}} + (1-\delta) \cdot \frac{\left(\EE[e^{-(\delta s_1+(1-\delta)s_2)\beta}]\right)^{(t-\tau)w}}{1-\left(\EE[e^{s_2\alpha}]\cdot\EE[e^{-s_2\beta}]\right)^{t-\tau}} \\
& \leq \delta \cdot \frac{\left(\EE[|e^{-s_1\beta}|] \cdot \EE[|e^{-s_2\beta}|]\right)^{(t-\tau)w}}{1-\left(\EE[e^{s_1\alpha}]\cdot\EE[e^{-s_1\beta}]\right)^{t-\tau}} + (1-\delta) \cdot \frac{\left(\EE[|e^{-s_1\beta}|] \cdot \EE[|e^{-s_2\beta}|]\right)^{(t-\tau)w}}{1-\left(\EE[e^{s_2\alpha}]\cdot\EE[e^{-s_2\beta}]\right)^{t-\tau}}\\
& \leq \delta \cdot \frac{\left(\EE[|e^{-s_1\beta}|]\right)^{(t-\tau)w}}{1-\left(\EE[e^{s_1\alpha}]\cdot\EE[e^{-s_1\beta}]\right)^{t-\tau}} + (1-\delta) \cdot \frac{\left(\EE[|e^{-s_2\beta}|]\right)^{(t-\tau)w}}{1-\left(\EE[e^{s_2\alpha}]\cdot\EE[e^{-s_2\beta}]\right)^{t-\tau}},
\end{aligned}
\end{equation}
since $0 \leq \left(\EE[e^{-s_i\beta}]\right)^{(t-\tau)w} \leq 1$ and the Mellin transform of the service is decreasing within the stability interval $(0,b)$, reaching the maximal value of 1 for \mbox{$s_i=0, i\in \{1,2\}$} and \mbox{$(t-\tau)w \in \mathbb{Z}$}, guaranteed by the discrete time-domain assumption done in Sec.~\ref{sec:snc_basics}. It follows that Eq.~\eqref{eq:whole_convexity} holds which concludes the delay bound convexity proof. 

To prove the second part of Theorem~\ref{thm:2}, we observe that Theorem~\ref{thm:1} defines  $\K^{\mathbb{L}} (s, -w)$ in terms of a recursion starting with the single hop kernel, which is convex in $s$ according to the first part of the proof above. As \mbox{$\M_{g(\gamma)}(1-s)>0, \forall s>0$}, the theorem follows since any positive linear combination of convex functions is also convex \cite{Book:clarke}. 
\end{proof}

\end{appendices}

\bibliographystyle{IEEEtran}
\bibliography{mybib}

\begin{thebibliography}{10}
\providecommand{\url}[1]{#1}
\csname url@samestyle\endcsname
\providecommand{\newblock}{\relax}
\providecommand{\bibinfo}[2]{#2}
\providecommand{\BIBentrySTDinterwordspacing}{\spaceskip=0pt\relax}
\providecommand{\BIBentryALTinterwordstretchfactor}{4}
\providecommand{\BIBentryALTinterwordspacing}{\spaceskip=\fontdimen2\font plus
\BIBentryALTinterwordstretchfactor\fontdimen3\font minus
  \fontdimen4\font\relax}
\providecommand{\BIBforeignlanguage}[2]{{%
\expandafter\ifx\csname l@#1\endcsname\relax
\typeout{** WARNING: IEEEtran.bst: No hyphenation pattern has been}%
\typeout{** loaded for the language `#1'. Using the pattern for}%
\typeout{** the default language instead.}%
\else
\language=\csname l@#1\endcsname
\fi
#2}}
\providecommand{\BIBdecl}{\relax}
\BIBdecl

\bibitem{zvei}
\BIBentryALTinterwordspacing
{ZVEI - German Electrical and Electronic Manufacturers' Association},
  ``{Coexistence of Wireless Systems in Automation Technology},'' 2009.
  [Online]. Available:
  \url{http://www.zvei.org/Publikationen/ZVEI\%20Coexistence\%20of\%20Wireless\%20Systems\%20in\%20Automation\%20Technology.pdf}
\BIBentrySTDinterwordspacing

\bibitem{berry}
R.~Berry, ``Optimal power-delay tradeoffs in fading channels - small-delay
  asymptotics,'' \emph{IEEE Transactions on Information Theory}, vol.~59,
  no.~6, pp. 3939--3952, June 2013.

\bibitem{zheng_ondemand}
R.~Zheng and R.~Kravets, ``On-demand power management for ad hoc networks,''
  \emph{Ad Hoc Networks}, vol.~3, no.~1, pp. 51 -- 68, 2005.

\bibitem{zafer_modiano_calculus}
M.~Zafer and E.~Modiano, ``{A Calculus Approach to Minimum Energy Transmission
  Policies with Quality of Service Guarantees},'' in \emph{INFOCOM 2005. 24th
  Annual Joint Conference of the IEEE Computer and Communications Societies.
  Proceedings IEEE}, vol.~1.\hskip 1em plus 0.5em minus 0.4em\relax Miami:
  IEEE, March 2005, pp. 548--559.

\bibitem{tang_zhang_singlehop}
J.~Tang and X.~Zhang, ``Cross-layer-model based adaptive resource allocation
  for statistical qos guarantees in mobile wireless networks,'' \emph{IEEE
  Transactions on Wireless Communications}, vol.~7, no.~6, pp. 2318--2328, June
  2008.

\bibitem{kandukuri_boyd}
S.~Kandukuri and S.~Boyd, ``{Optimal Power Control in Interference-Limited
  Fading Wireless Channels with Outage-Probability Specifications},''
  \emph{IEEE Transactions on Wireless Communications}, vol.~1, no.~1, pp.
  46--55, Jan 2002.

\bibitem{alzubaidy_ton}
H.~Al-Zubaidy, J.~Lieberherr, and A.~Burchard, ``{N}etwork-{L}ayer
  {P}erformance {A}nalysis of {M}ulti-{H}op {F}ading {C}hannels,''
  \emph{IEEE/ACM Transactions on Networking (ToN)}, vol.~24, no.~1, pp.
  204--217, Feb 2016.

\bibitem{icc15}
N.~Petreska, H.~Al-Zubaidy, R.~Knorr, and J.~Gross, ``{On the Recursive Nature
  of End-to-End Delay Bound for Heterogeneous Wireless Networks},'' in
  \emph{IEEE International Conference on Communications 2015 (ICC 2015)}.\hskip
  1em plus 0.5em minus 0.4em\relax London: IEEE, June 2015, pp. 5998--6004.

\bibitem{kozat}
U.~Kozat, I.~Koutsopoulos, and L.~Tassiulas, ``{Cross-Layer Design for Power
  Efficiency and QoS Provisioning in Multi-Hop Wireless Networks},''
  \emph{Wireless Communications, IEEE Transactions on}, vol.~5, no.~11, pp.
  3306--3315, November 2006.

\bibitem{cruz_power_multi}
R.~Cruz and A.~Santhanam, ``{Optimal Routing, Link Scheduling and Power Control
  in Multi-hop Wireless Networks},'' in \emph{Proc. INFOCOM 2003.}\hskip 1em
  plus 0.5em minus 0.4em\relax San Francisco: IEEE, 2003, pp. 702--711.

\bibitem{katsenou}
A.~Katsenou, E.~Datsika, L.~Kondi, E.~Papapetrou, and K.~Parsopoulos,
  ``{Power-Aware QoS Enhancement in Multihop DS-CDMA Visual Sensor Networks},''
  in \emph{Digital Signal Processing (DSP), 2013 18th International Conference
  on}.\hskip 1em plus 0.5em minus 0.4em\relax Fira: IEEE, July 2013, pp. 1--6.

\bibitem{banerjee}
S.~Banerjee and A.~Misra, ``{Minimum Energy Paths for Reliable Communication in
  Multi-hop Wireless Networks},'' in \emph{MOBIHOC'02}.\hskip 1em plus 0.5em
  minus 0.4em\relax New York: ACM, June 2002, pp. 145--156.

\bibitem{julian_boyd}
D.~Julian, M.~Chiang, D.~O'Neill, and S.~Boyd, ``{QoS and Fairness Constrained
  Convex Optimization of Resource Allocation for Wireless Cellular and Ad Hoc
  Networks},'' in \emph{Proc. INFOCOM 2002.}, vol.~2.\hskip 1em plus 0.5em
  minus 0.4em\relax New York: IEE, June 2002, pp. 477--486.

\bibitem{tang_zhang}
J.~Tang and X.~Zhang, ``{P}ower-{D}elay {T}radeoff over {W}ireless
  {N}etworks,'' in \emph{World of Wireless, Mobile and Multimedia Networks,
  2008. WoWMoM 2008. 2008 International Symposium on a}.\hskip 1em plus 0.5em
  minus 0.4em\relax Newport Beach, CA: IEEE, June 2008, pp. 1--12.

\bibitem{neely_modiano}
M.~Neely, E.~Modiano, and C.~Rohrs, ``{Dynamic Power Allocation and Routing for
  Time-Varying Wireless Networks},'' \emph{IEEE Journal on Selected Areas in
  Communications}, vol.~23, no.~1, pp. 89--103, Jan 2005.

\bibitem{snc_book}
Y.~Jiang and Y.~Liu, \emph{Stochastic Network Calculus}.\hskip 1em plus 0.5em
  minus 0.4em\relax USA: Springer, 2008.

\bibitem{jiang_emstad}
Y.~Jiang and P.~Emstad, ``{Analysis of Stochastic Service Guarantees in
  Communication Networks: A Traffic Model},'' Norwegian University of Science
  and Technology, Centre for Quantifiable Quality of Service in Communication
  Systems, Department of Telematics, Tech. Rep., February 2005.

\bibitem{fidler_mgf}
M.~Fidler, ``An end-to-end probabilistic network calculus with moment
  generating functions,'' in \emph{Quality of Service, 2006. IWQoS 2006. 14th
  IEEE International Workshop on}.\hskip 1em plus 0.5em minus 0.4em\relax New
  Haven, CT: IEEE, 2006, pp. 261--270.

\bibitem{fidler_wowmom}
R.~Lubben and M.~Fidler, ``{Non-equilibrium Information Envelopes and the
  Capacity-Delay-Error-Tradeoff of Source Coding},'' in \emph{World of
  Wireless, Mobile and Multimedia Networks (WoWMoM), 2012 IEEE International
  Symposium on a}.\hskip 1em plus 0.5em minus 0.4em\relax San Francisco: IEEE,
  June 2012, pp. 1--9.

\bibitem{fidler_globecom}
M.~Fidler, ``{WLC15-2:} {A} {N}etwork {C}alculus {A}pproach to {P}robabilistic
  {Q}uality of {S}ervice {A}nalysis of {F}ading {C}hannels,'' in \emph{Globecom
  '06. IEEE}.\hskip 1em plus 0.5em minus 0.4em\relax San Francisco: IEEE, Nov
  2006, pp. 1--6.

\bibitem{florin_throughput}
F.~Ciucu, O.~Hohlfeld, and P.~Hui, ``Non-{A}symptotic {T}hroughput and {D}elay
  {D}istributions in {M}ulti-{H}op {W}ireless {N}etworks,'' in \emph{48th
  Annual Allerton Conference on Communication, Control, and Computing
  (Allerton)}.\hskip 1em plus 0.5em minus 0.4em\relax Illinois: IEEE, Sept
  2010, pp. 662--669.

\bibitem{florin_capacity}
F.~Ciucu, ``Non-{A}symptotic {C}apacity and {D}elay {A}nalysis of {M}obile
  {W}ireless {N}etworks,'' \emph{SIGMETRICS Perform. Eval. Rev.}, vol.~39,
  no.~1, pp. 359--360, Jun. 2011.

\bibitem{florin_infocom}
F.~Ciucu, R.~Khalili, Y.~Jiang, L.~Yang, and Y.~Cui, ``Towards a {S}ystem
  {T}heoretic {A}pproach to {W}ireless {N}etwork {C}apacity in {F}inite {T}ime
  and {S}pace,'' in \emph{IEEE Infocom}.\hskip 1em plus 0.5em minus 0.4em\relax
  Toronto: IEEE, April 2014, pp. 2391--2399.

\bibitem{ciucu_sigmetrics}
F.~Ciucu, A.~Burchard, and J.~Liebeherr, ``A network service curve approach for
  the stochastic analysis of networks,'' in \emph{Proceedings of the 2005 ACM
  SIGMETRICS International Conference on Measurement and Modeling of Computer
  Systems}, ser. SIGMETRICS '05.\hskip 1em plus 0.5em minus 0.4em\relax New
  York: ACM, 2005, pp. 279--290.

\bibitem{itc14}
N.~Petreska, H.~Al-Zubaidy, and J.~Gross, ``{Power Minimization for Industrial
  Wireless Networks under Statistical Delay Constraints},'' in
  \emph{Teletraffic Congress (ITC), 2014 26th International}.\hskip 1em plus
  0.5em minus 0.4em\relax Karlskrona: IEEE, Sept 2014, pp. 1--9.

\bibitem{slow_fading}
R.~Knopp and P.~Humblet, ``On coding for block fading channels,''
  \emph{Information Theory, IEEE Transactions on}, vol.~46, no.~1, pp.
  189--205, 2000.

\bibitem{wu}
W.~Dapeng and R.~Negi, ``Effective {C}apacity: {A} {W}ireless {L}ink {M}odel
  for {S}upport of {Q}uality of {S}ervice,'' \emph{IEEE Transactions Wireless
  Communications}, vol.~2, no.~4, pp. 630--643, July 2003.

\bibitem{jiang:servermodel}
Y.~Jiang and P.~Emstad, ``{Analysis of Stochastic Service Guarantees in
  Communication Networks: A Server Model},'' Norwegian University of Science
  and Technology, Centre for Quantifiable Quality of Service in Communication
  Systems, Department of Telematics, Tech. Rep., April 2005.

\bibitem{lee_jindal}
J.~Lee and N.~Jindal, ``{Energy-Efficient Scheduling of Delay Constrained
  Traffic Over Fading Channels},'' \emph{Wireless Communications, IEEE
  Transactions on}, vol.~8, no.~4, pp. 1866--1875, April 2009.

\bibitem{mahmood:mimo}
K.~Mahmood, M.~Vehkaper{\"{a}}, and Y.~Jiang, ``Delay constrained throughput
  analysis of a correlated {MIMO} wireless channel,'' in \emph{Computer
  Communications and Networks (ICCCN), 2011 Proceedings of 20th International
  Conference on}.\hskip 1em plus 0.5em minus 0.4em\relax Maui, HI: IEEE, 2011,
  pp. 1--7.

\bibitem{Book:mellin}
B.~Davies, \emph{Integral Transforms and Their Applications}.\hskip 1em plus
  0.5em minus 0.4em\relax New York: Springer-Verlag, 1978.

\bibitem{khader_willig}
O.~Khader and A.~Willig, ``An energy consumption analysis of the wireless
  {HART} {TDMA} protocol,'' \emph{Computer Communications}, vol.~36, no.~7, pp.
  804 -- 816, 2013.

\bibitem{whart_online}
\BIBentryALTinterwordspacing
H.~C. Foundation, ``Wirelesshart\textregistered technology,'' 2013. [Online].
  Available: \url{http://www.hartcomm.org/}
\BIBentrySTDinterwordspacing

\bibitem{atmel}
A.~Corporation, \emph{MCU Wireless AT86RF233 Preliminary Datasheet}, Atmel
  Corporation, 1600 Technology Drive, San Jose, CA 95110 USA, 2014.

\bibitem{pe-wasun16}
\BIBentryALTinterwordspacing
N.~Petreska, H.~Al-Zubaidy, B.~Staehle, R.~Knorr, and J.~Gross, ``{Statistical
  Delay Bound for WirelessHART Networks},'' 2016, to appear in Proceedings of
  ACM International Symposium on Performance Evaluation of Wireless Ad Hoc,
  Sensor and Ubiquitous Networks PE-WASUN 2016. [Online]. Available:
  \url{https://arxiv.org/abs/1607.08102}
\BIBentrySTDinterwordspacing

\bibitem{802_15_4_standard}
\BIBentryALTinterwordspacing
{IEEE 802.15.4 WPAN Task Group}, ``{802.15.4-2006 - IEEE Standard for
  Information technology -- Local and metropolitan area neworks -- Specific
  requirements -- Part 15.4: Wireless Medium Access Control (MAC) and Physical
  Layer (PHY) Specifications for Low Rate Wireless Personal Area Networks
  (WPANs)},'' Jun. 2006. [Online]. Available:
  \url{https://standards.ieee.org/findstds/standard/802.15.4-2006.html}
\BIBentrySTDinterwordspacing

\bibitem{hoelder}
I.~Florescu and C.~Tudor, \emph{Appendix B: Inequalities Involving Random
  Variables and Their Expectations}.\hskip 1em plus 0.5em minus 0.4em\relax
  Wiley Online Library: John Wiley \& Sons, Inc., 2013, pp. 434--444.

\bibitem{Book:boyd}
S.~Boyd and L.~Vandenberghe, \emph{Convex Optimization}.\hskip 1em plus 0.5em
  minus 0.4em\relax Cambridge, UK: Cambridge University Press, 2004.

\bibitem{Book:clarke}
F.~Clarke, \emph{Functional Analysis, Calculus of Variations and Optimal
  Control}.\hskip 1em plus 0.5em minus 0.4em\relax London, UK: Springer-Verlag
  London, 2013.

\end{thebibliography}
              
\end{document}